\declaretheorem[name=Lemma]{lemma}
\newtheorem{definition}{Definition}
\newtheorem{corollary}{Corollary}
\newtheorem{theorem}{Theorem}
\begin{document}

\title{Classification of measurement-based quantum wire in\newline stabilizer PEPS}

\author{Paul Herringer} 
\affiliation{Department of Physics and Astronomy, University of British Columbia, Vancouver, Canada}
\affiliation{Stewart Blusson Quantum Matter Institute, University of British Columbia, Vancouver, Canada}
\author{Robert Raussendorf}
\affiliation{Department of Physics and Astronomy, University of British Columbia, Vancouver, Canada}
\affiliation{Stewart Blusson Quantum Matter Institute, University of British Columbia, Vancouver, Canada}

\begin{abstract}
    We consider a class of translation-invariant 2D tensor network states with a stabilizer symmetry, which we call stabilizer PEPS. The cluster state, GHZ state, and states in the toric code belong to this class. We investigate the transmission capacity of stabilizer PEPS for measurement-based quantum wire, and arrive at a complete classification of transmission behaviors. The transmission behaviors fall into 13 classes, one of which corresponds to Clifford quantum cellular automata. In addition, we identify 12 other classes. 
\end{abstract}

\maketitle

\section{Introduction}

How to harness multi-particle entanglement is a central question in quantum information processing. Among the many known protocols are quantum steering~\cite{Steer1,Steer2}, distribution of localizable entanglement \cite{LocE} in quantum networks, Bell-state distillation in quantum communication~\cite{Bennett,Briegel}, and measurement-based quantum computation (MBQC)~\cite{Raussendorf2001}.

In some of these protocols, symmetry plays an important role. For example, we observe this feature in measurement-based quantum information processing, including measurement-based quantum state transmission (also known as quantum wire)~\cite{LRE,SPT2}, quantum computation on a resource state~\cite{Raussendorf2001,AKLT1,AKLT2}, and computational phases of quantum matter~\cite{SPT1,SPT2,SPT3,SPT4,SPT6,SPT7, Bridge, Raussendorf2019,Stephen2019}.

\begin{figure}[t]
    \begin{center}
    \includegraphics[width=0.3\textwidth]{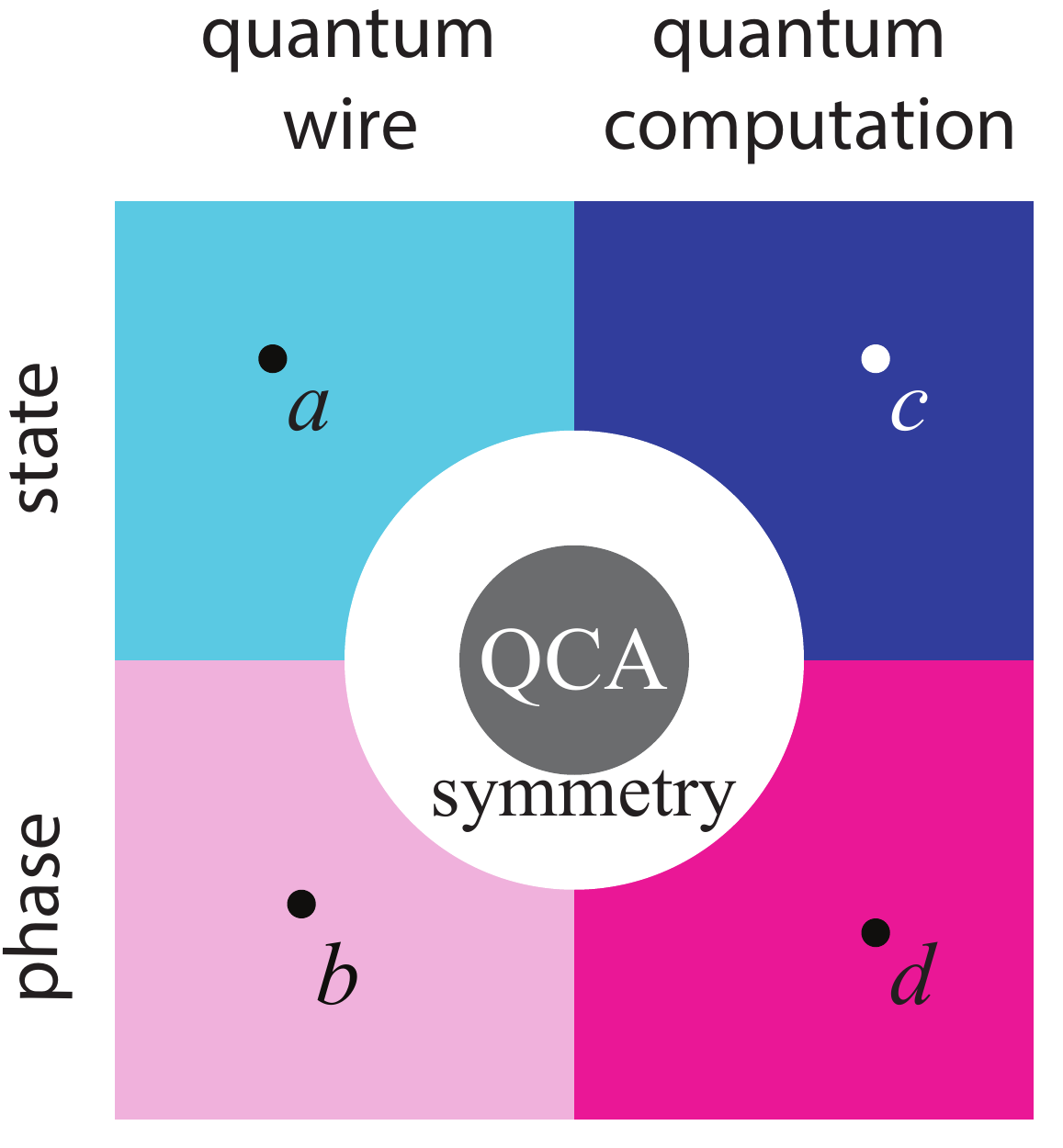}
    \caption{\label{fig:overview}Phenomenology of measurement-induced state transformations---state vs. phase and transmission vs. computation. Top left: In cluster states, arbitrarily distant qubits can be projected into Bell pairs by local measurement on other qubits, $a$=\cite{LRE}. Bottom left: the same feature is observed in an entire SPT phase surrounding the 1D cluster state, $b$=\cite{SPT2}. Top right: In spatial dimension two and higher, cluster states  give rise to universal measurement-based quantum computation, $c$=\cite{Raussendorf2001}. Bottom right: Universal measurement based quantum computation persists in entire SPT phases of matter, $d$=\cite{SPT6,SPT7, Raussendorf2019,Stephen2019}. Centre: All the phenomena displayed rely on symmetries of the PEPS tensor representing the resource state. Some, but not all, of these symmetries give rise to Clifford cellular automata. }
    \end{center}
\end{figure}

The present work addresses the role of symmetry in protocols of measurement-based state transformation. Specifically, we study a class of highly symmetric 2D tensor network states and classify their transmission capacity for quantum wire. This is a first step toward a classification of schemes of measurement-based quantum computation in higher spatial dimension, based on symmetry.

To put our work into perspective, below we summarize several measurement-based, symmetry-aided protocols and describe our classification result in more detail. See Fig. \ref{fig:overview} for a summary of the phenomenology of interest. The simplest among the phenomena displayed is the distribution of localizable entanglement, as can be observed in cluster states, for example. Therein, any pair of qubits can be brought to a Bell-entangled state by locally measuring other qubits in the state \cite{LRE}. It was later found that the creation of Bell-type entanglement, or equivalently quantum wire, works for an entire symmetry-protected topological (SPT) phase surrounding the cluster state~\cite{SPT2}. The quantum wire is a direct consequence of the symmetry representation in those phases.

Ramping up the complexity of the quantum processing task, it is also known that certain special quantum states, such as cluster states and AKLT states on various two-dimensional lattices, are universal resources for quantum computation by local measurement \cite{LRE}, \cite{AKLT1,AKLT2,AKLT3}.

The most complex notion in this area is that of computational phases of quantum matter~\cite{SPT1,SPT2,SPT3,SPT4,SPT6,SPT7, Bridge, Raussendorf2019,Stephen2019}, which extend the computational power of individual resource states to entire SPT phases surrounding them. For suitable SPT phases in dimensions 1 and 2, the computational power of MBQC resource states is uniform across the phase, i.e., all states in a given SPT phase have the same computational power. Indeed, computational phases of matter exist that support universal quantum computation~\cite{Raussendorf2019,Stephen2019,SPT6,SPT7}.

The current understanding of computational phases relies upon symmetric tensor networks and Clifford quantum cellular automata (QCA)~\cite{Raussendorf2019, Stephen2019, Schlingemann2008}. In this paradigm, the symmetries of a tensor network are associated with a Clifford QCA that persists throughout an entire SPT phase, and it is the presence of this QCA that enables MBQC with any state in the phase. This is the content of Theorem 2 in~\cite{Stephen2019}.

Our motivation for the present work is the observation that many symmetric tensor network states relevant to quantum information processing do not admit a description in terms of QCA. These include the GHZ state and the toric code \cite{Kitaev2003}, as described in Section \ref{subsec:egs}. And so, we ask: What else is there? 

To get a handle on this question, in this paper we investigate measurement-based quantum wire implemented on symmetric 2D tensor network states. In this context, we arrive at a complete classification of transmission behaviors. One of the classes identified corresponds to Clifford QCA, which have their own (finer) classification~\cite{Schlingemann2008,Schumacher2004}. In addition to Clifford QCAs, we describe 12 new classes.

The paper is organized as follows. In Sections \ref{subsec:setting}-\ref{subsec:egs}, we describe the setting for our investigation of stabilizer PEPS, detail the applications of stabilizer PEPS in measurement-based quantum information processing, and provide some examples. In Section \ref{subsec:results}, we state our main result. Section \ref{sec:classification} is devoted to proving this result, and Section \ref{sec:concl} concludes.

\section{Phenomenology} \label{sec:phenom}

\subsection{Setting} \label{subsec:setting}

Our setting is a 2D projected entangled pair state (PEPS) on a square lattice with cylindrical geometry~(Fig.~\ref{fig:setting}). The circumference and depth of the cylinder are denoted $n$ and $d$, respectively, and the state is translation invariant in the sense that the local PEPS tensor is the same for every lattice site.

\begin{figure}
  \centering
  \includegraphics[scale=0.7]{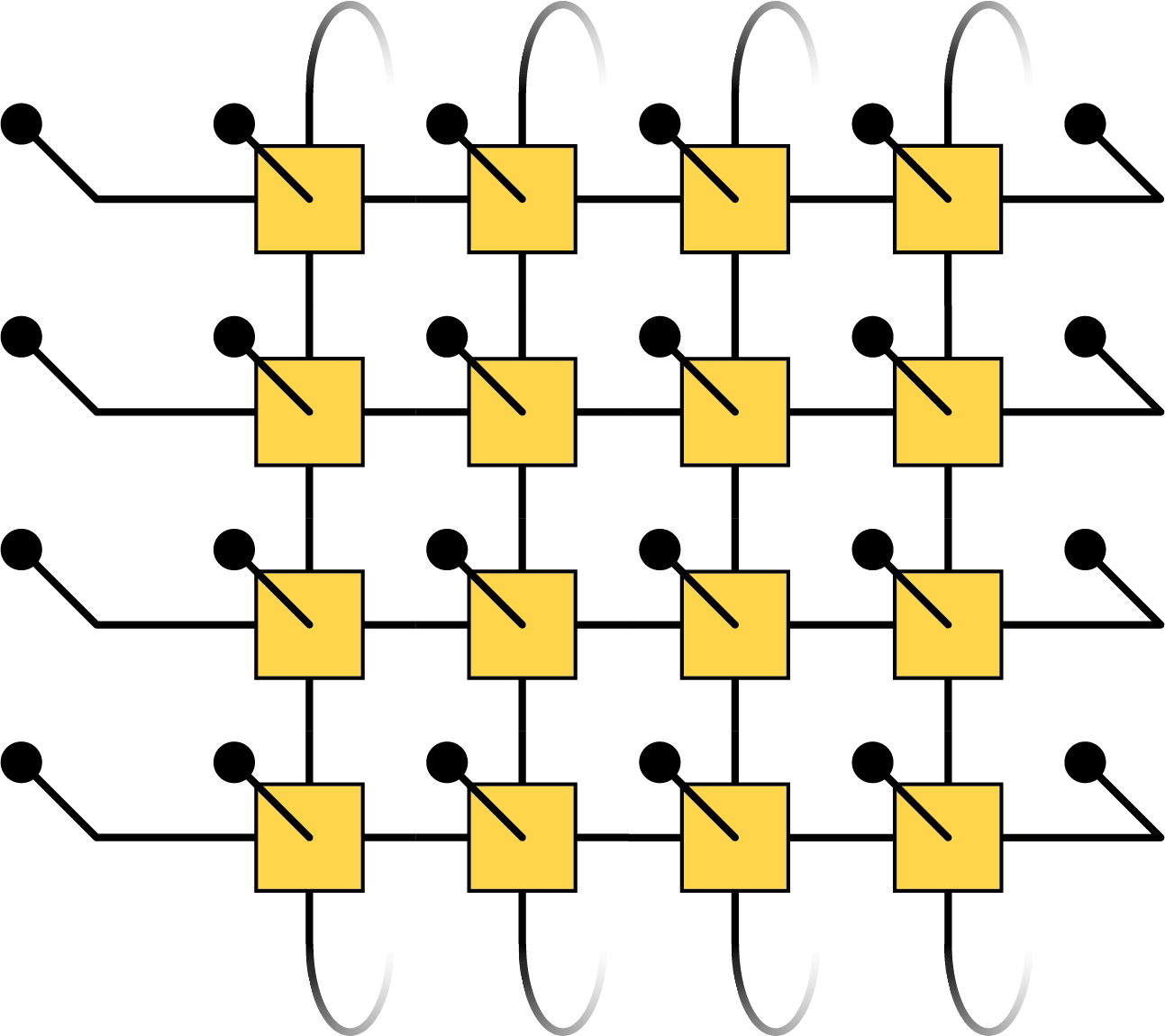}
  \caption{Cylindrical PEPS.}
  \label{fig:setting}
\end{figure}

We study a class of states characterized by local tensors which obey a set of symmetry constraints of the form
\begin{equation} \label{eqf:gen symm}
  \includegraphics[valign=c, scale=0.75]{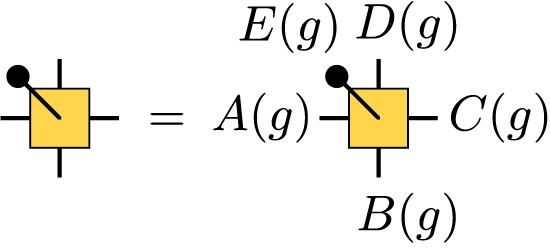}
\end{equation}
where $g$ is an element of the symmetry group of the tensor and $A(g)$ through $E(g)$ are Pauli operators that form a linear representation of $g$. In other words, the symmetry group of the local tensor is also a stabilizer group. Therefore, we call these states \textit{stabilizer PEPS}, and the local tensors \textit{stabilizer tensors}. We emphasize that stabilizer PEPS are not only theoretical but can be prepared in the lab: a stabilizer tensor is equivalent to a 5-qubit stabilizer state, and two tensors can be contracted by Bell measurements~(Fig.~\ref{fig:qubits}).

\begin{figure}
  \centering
  \begin{tabular}{c}
      \includegraphics[scale=0.6]{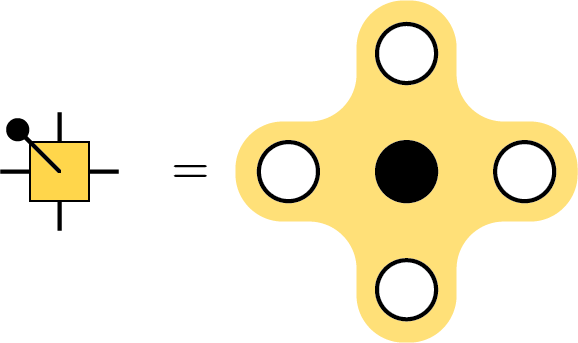} \\ 
      \rule[-6mm]{0pt}{10mm} (a) \\
      \includegraphics[scale=0.6]{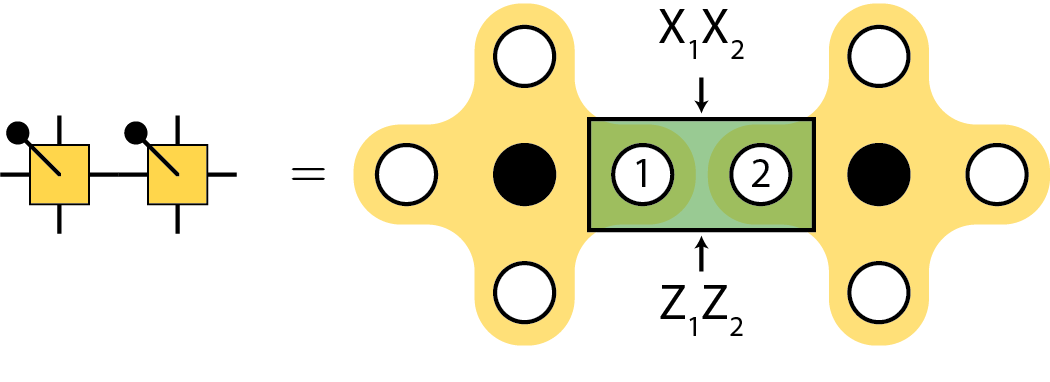} \\ 
      \rule{0pt}{4mm} (b) 
  \end{tabular}
  \caption{Physical realization of stabilizer PEPS. (a) A stabilizer tensor is a 5-qubit stabilizer state. (b) Contraction of two ``tensors'' by Bell measurement.}
  \label{fig:qubits}
\end{figure}

\subsection{Applications} \label{subsec:applications}
There are a variety of quantum information processing tasks that can be accomplished by invoking stabilizer PEPS and local measurements. In order of increasing complexity, these are (i) quantum wire or creation of long-range Bell-type entanglement, (ii) quantum computation on a resource state, and (iii) quantum computation in a suitable SPT phase. 

\paragraph{Quantum wire}
The quantum wire protocol consists of performing an $X$ measurement at every physical site in the bulk of an $n\times d$ stabilizer PEPS, which creates Bell-type entanglement between the left and right edges\footnote{The choice of $X$ as the measurement basis is arbitrary; any measurement basis can be accommodated by changing the action of the local tensor symmetries on the physical leg.}. The number of qubits worth of entanglement created in this way is a non-negative integer $C(n,d)$ such that $0\leq C(n,d)\leq n$. We call this number the transmission capacity of the stabilizer PEPS, because the creation of long-range Bell-type entanglement is equivalent to the transmission of quantum information: given long-range entanglement, quantum information may be transmitted by the standard teleportation protocol, and given the ability to transmit quantum information, a Bell pair may be created. 

\paragraph{MBQC}
If the local measurement basis is allowed to vary from site to site, it becomes possible to do more than simply transmit quantum information from one end of the cylinder to the other. Indeed, we can perform unitary operations on the $n$-qubit Hilbert space spanned by the sites around the circumference of the cylinder. For some resource states, such as the cluster state, this leads to universal quantum computation. 

\paragraph{Computational phases}
Certain stabilizer PEPS also possess subsystem symmetries, which act on subsets of the lattice rather than at every site~\cite{Stephen2019}. The family of states that can be reached from these stabilizer PEPS by local, finite-depth quantum circuits that respect the subsystem symmetries forms an SPT phase. Some of these SPT phases are known to have uniform computational power~\cite{Raussendorf2019,Stephen2019}. In other words, given an appropriate measurement protocol every state in the phase is equally useful for MBQC.

As a first step towards understanding the phenomenology of quantum information processing with stabilizer PEPS, the present paper deals with the simplest task, quantum wire.

\subsection{Examples} \label{subsec:egs}

In this section, we present three examples of well-known states that are also stabilizer PEPS. We detail their computational capability with respect to the three tasks described above, and highlight the role of symmetry and QCA.

\subsubsection{Cluster state}

The 2D cluster state is central to the study of MBQC because it is useful for quantum information processing at all three levels of complexity described in the previous section. At the simplest level, the cluster state has maximal transmission capacity for quantum wire, $C(n,d)=n$ for any depth. At the next level, MBQC, we can perform universal quantum computation on the cluster state via local measurements. Finally, the cluster state lies in an SPT phase called the cluster phase, and every state in the cluster phase is a universal resource for MBQC~\cite{Raussendorf2019}. 

The cluster state is also a stabilizer PEPS where the local tensor has the following symmetries:
\begin{equation} \label{eqf:cluster_symms_x}
  \includegraphics[valign=c, scale=0.7]{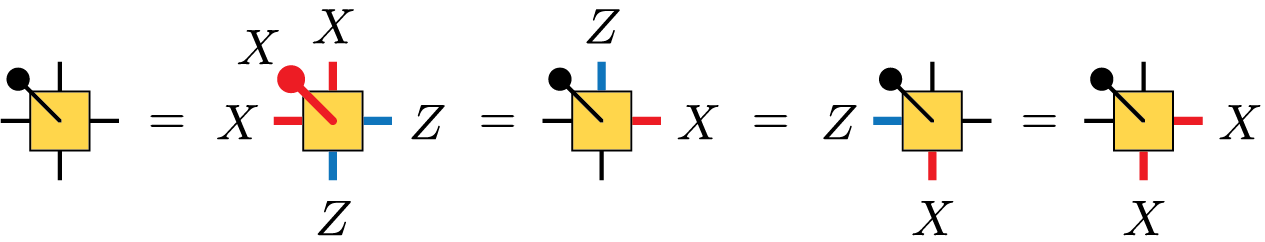}
\end{equation}
\begin{equation} \label{eqf:cluster_symms_z}
  \includegraphics[valign=c, scale=0.7]{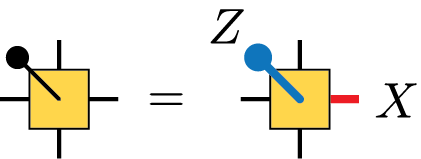}
\end{equation}
To see this explicitly, we can construct the stabilizer generators of the cluster state from its tensor symmetries (Fig. \ref{fig:stabs}a).

\begin{figure}
    \centering
    \begin{tabular}{cc}
        \includegraphics[scale=0.6]{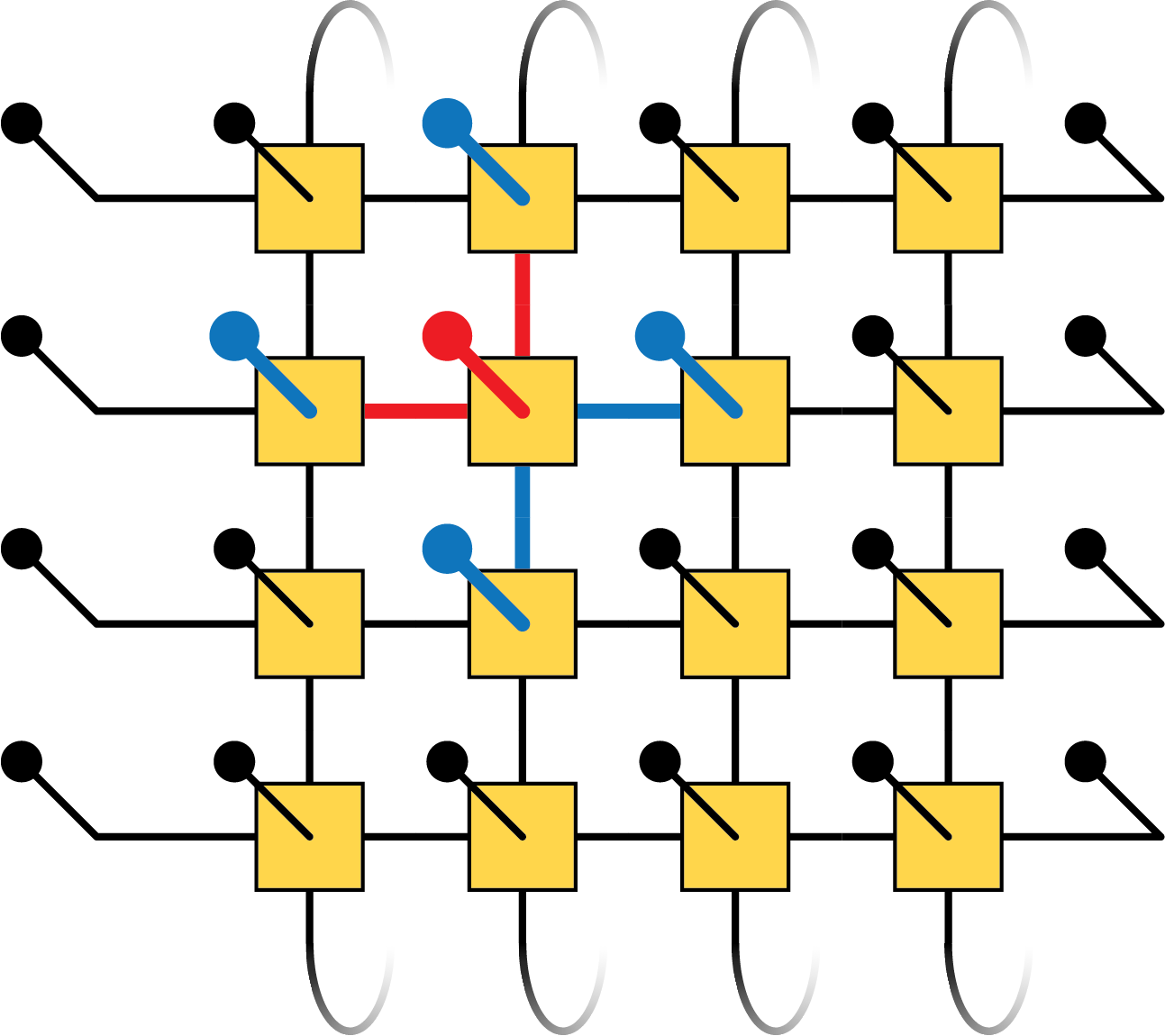} \hspace{5mm} & \includegraphics[scale=0.6]{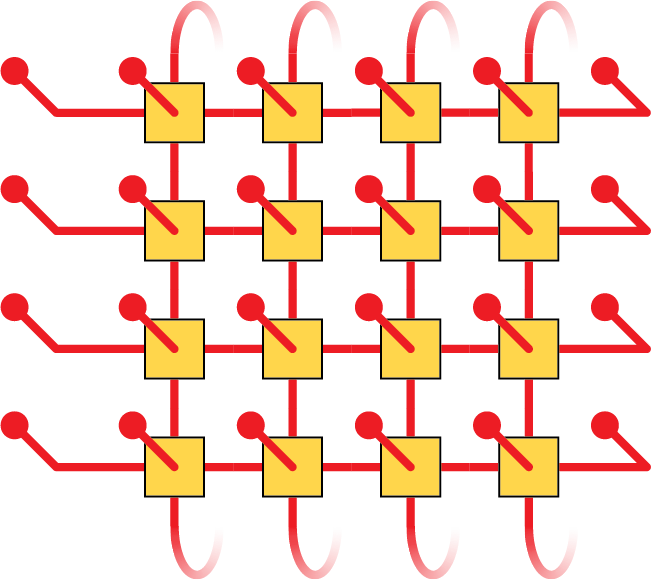} \\
        \rule[-5mm]{0pt}{9mm} (a) & (b) \\
        \includegraphics[scale=0.6]{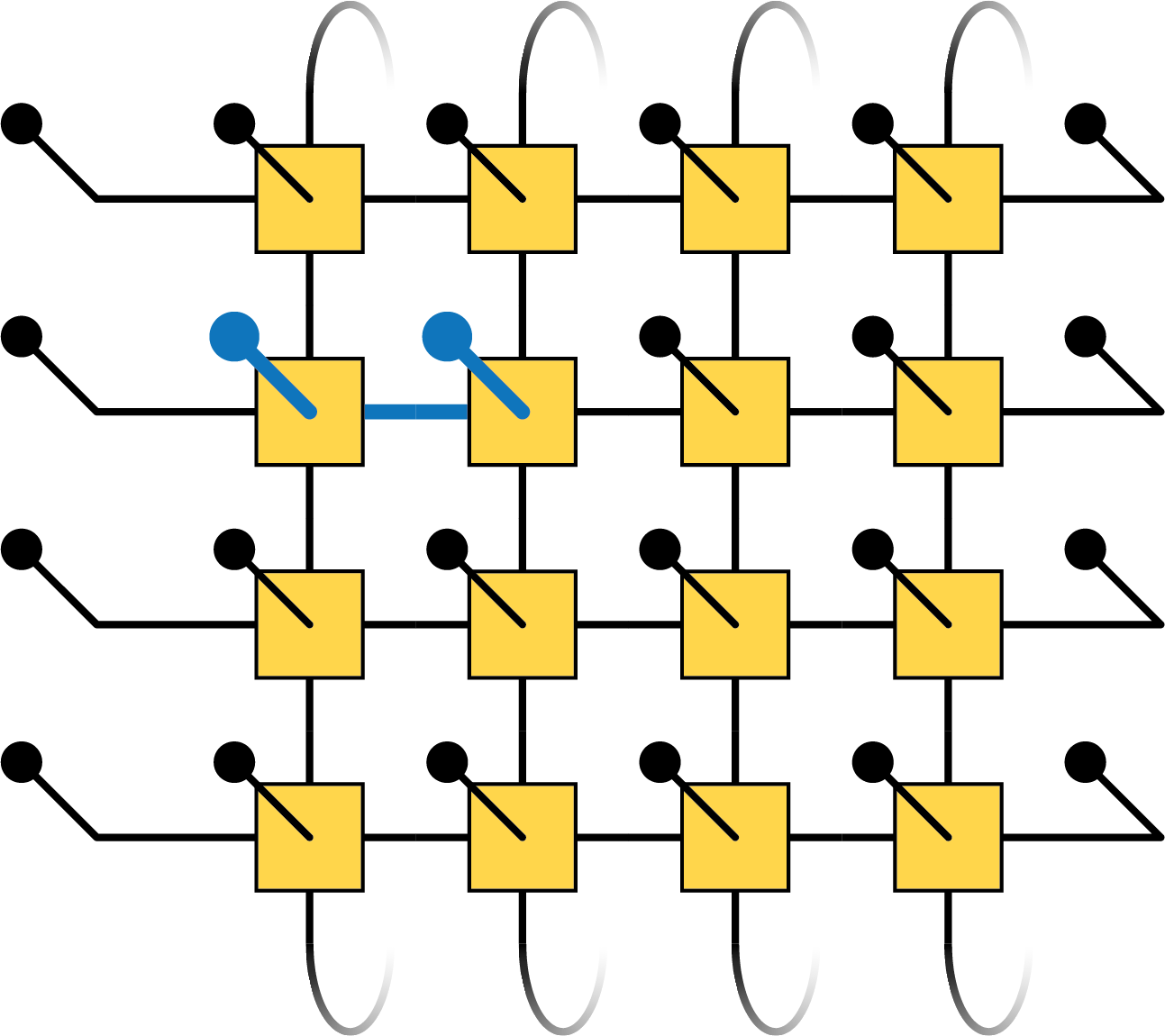} \hspace{5mm} & \includegraphics[scale=0.6]{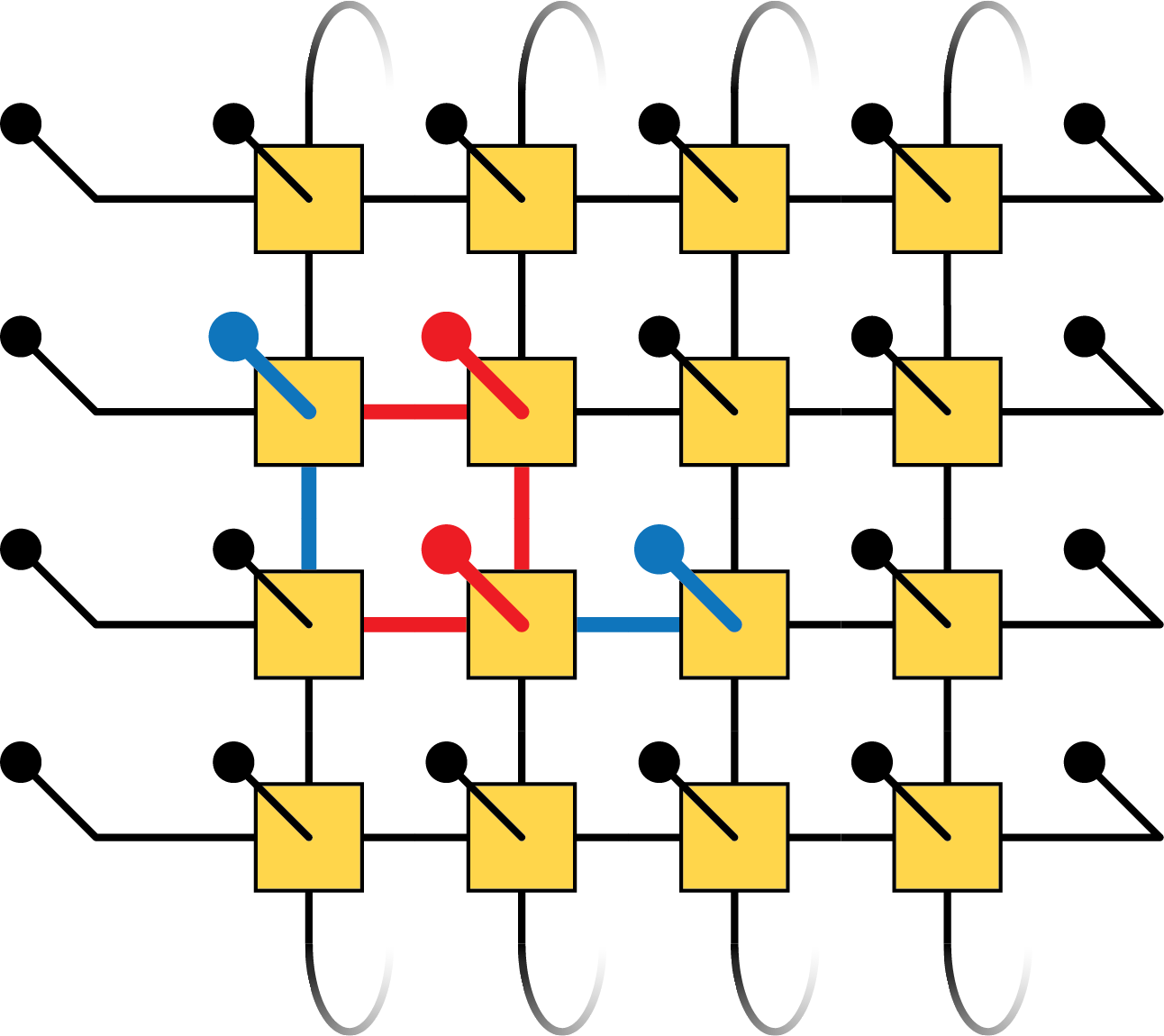} \\
        \rule{0pt}{4mm} (c) & (d)
    \end{tabular}
    \caption{Stabilizer generators for (a) the cluster state, (b, c) the GHZ state, and (d) the toric code state, all constructed from the corresponding local tensor symmetries.}
    \label{fig:stabs}
\end{figure}

The symmetry representation of the cluster state offers an illuminating perspective on quantum wire. Let us treat a single ring of tensors as a matrix product operator (MPO) acting on correlation space. Measurement in the $X$ basis destroys the symmetry \eqref{eqf:cluster_symms_z}, because it anticommutes with the measurement basis. The action of the remaining symmetries \eqref{eqf:cluster_symms_x} on the virtual legs define a local, unitary update rule, that is, a QCA~(Fig. \ref{fig:cols}a and \ref{fig:cols}b). Performing an $X$ measurement on every spin in the bulk of an $n\times d$ cluster state is equivalent to applying this QCA $d$ times, which entangles every qubit on the left edge with every qubit on the right. In this way, we obtain maximal transmission capacity.  

\subsubsection{GHZ state}

The Greenberger-Horne-Zeilinger (GHZ) state~\cite{Greenberger1989} is important in the study of quantum foundations~\cite{Greenberger1990}, entanglement~\cite{Dur2000,Sanz2016}, and SPT phases~\cite{Schuch2011}, however, it is of limited computational use. With a GHZ state, the quantum wire protocol creates only one qubit worth of entanglement, $C(n,d)=1$ regardless of the circumference. Consequently, there is only one logical qubit available for the more complex task of MBQC. Furthermore, the only operations available for this logical qubit are $Z$ rotations. 

As a stabilizer PEPS, the local tensors of the GHZ state satisfy
\begin{equation} \label{eqf:ghz x}
  \includegraphics[valign=c, scale=0.7]{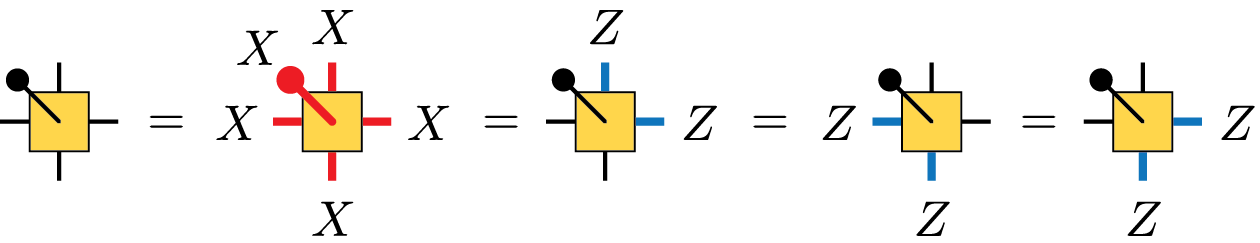}
\end{equation}
\begin{equation} \label{eqf:ghz z}
  \includegraphics[valign=c, scale=0.7]{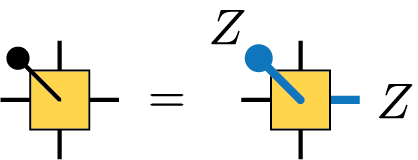}
\end{equation}
With \eqref{eqf:ghz x} and \eqref{eqf:ghz z} we may construct the stabilizers of the GHZ state (Fig. \ref{fig:stabs}b and \ref{fig:stabs}c), just as the cluster state stabilizers are constructed from \eqref{eqf:cluster_symms_x} and \eqref{eqf:cluster_symms_z}. However, upon forming a ring of GHZ tensors we find that the resulting MPO is not a QCA because it has a non-trivial kernel, i.e. some operators are mapped to the identity when reading from left to right (Fig. \ref{fig:cols}c). The image of the MPO is spanned by just two operators, which permits transmission of a single qubit (Fig. \ref{fig:cols}e and \ref{fig:cols}f). 

\begin{figure}
    \centering
    \begin{tabular}{ccc}
        \includegraphics[scale=0.7]{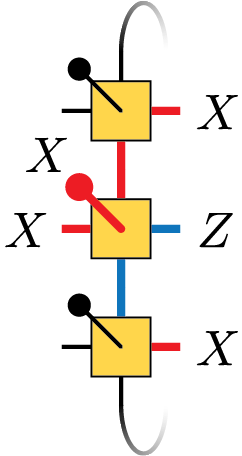} \hspace{5mm} & \includegraphics[scale=0.7]{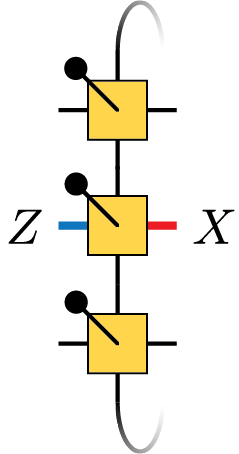} \hspace{5mm} & \includegraphics[scale=0.7]{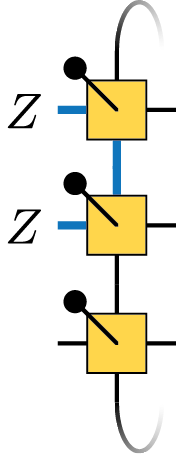} \\
        \rule[-5mm]{0pt}{9mm} (a) & (b) & (c) \\
        \includegraphics[scale=0.7]{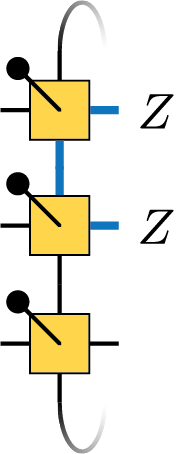} \hspace{5mm} & \includegraphics[scale=0.7]{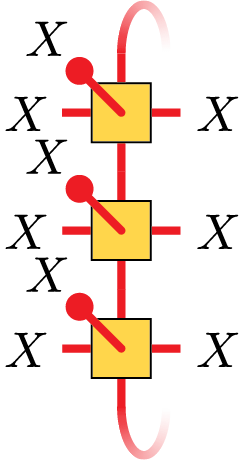} \hspace{5mm} & \includegraphics[scale=0.7]{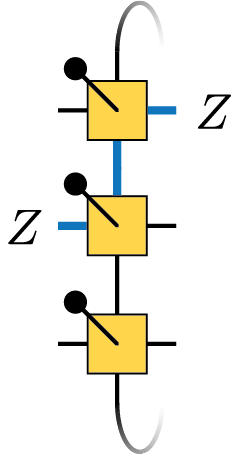} \\
        \rule{0pt}{4mm} (d) & (e) & (f)
    \end{tabular}
    \caption{When measured in the $X$ basis, a single layer of a cylindrical stabilizer PEPS constitutes an MPO acting on correlation space. The MPO acts to the left, thus information flows from left to right. (a,~b) The symmetry action on the virtual legs of the cluster state tensors defines a QCA with update rule $T(X_i) = X_{i-1}Z_iX_{i+1}$ and $T(Z_i)=X_i$. (c-f) The symmetries of the GHZ state define a non-unitary update rule.}
    \label{fig:cols}
\end{figure}

\subsubsection{Toric code}

The toric code~\cite{Kitaev2003} is a popular quantum code, suited to realizing fault-tolerant universal quantum computation in planar architectures~\cite{Raussendorf2007}. All states in the toric code space have topological order.

In our quantum wire setting, the toric code state has transmission capacity $C(n,d)=n-1$. In the context of MBQC we can perform some unitary operations, however, it has been shown that MBQC on the toric code can be simulated efficiently on a classical computer~\cite{Bravyi2007,Bravyi2022}. We are not aware of any work on a computational phase based around the toric code; whether or not such a phase exists remains an open question.

The toric code appears in the class of stabilizer PEPS. Consider a stabilizer tensor with the following symmetries:
\begin{equation} \label{eqf:toric_x}
  \includegraphics[valign=c, scale=0.7]{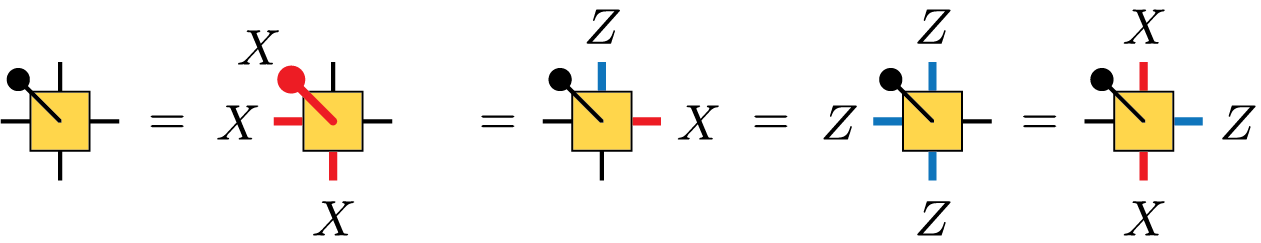}
\end{equation}
\begin{equation} \label{eqf:toric_z}
  \includegraphics[valign=c, scale=0.7]{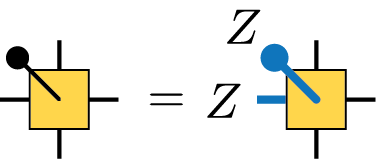}
\end{equation}
From \eqref{eqf:toric_x} and \eqref{eqf:toric_z}, we construct the 4-local stabilizer generators shown in Fig. \ref{fig:stabs}d. By shearing the lattice, it can be seen that these correspond to a translation-invariant version of the toric code, also known as the XZZX code~\cite{wenQuantumOrdersExact2003, kayCapabilitiesPerturbedToric2011}, obtained by applying Hadamard operators along every other column of the original toric code.    

A ring of toric code tensors gives an MPO that is non-unitary, similar to the GHZ tensors. The difference is that the kernel of the toric code MPO is smaller in dimension, which allows a greater quantity of quantum information to pass through each layer of the cylinder. 

\begin{figure*}[!htbp]
  \centering
  \includegraphics[width=\textwidth]{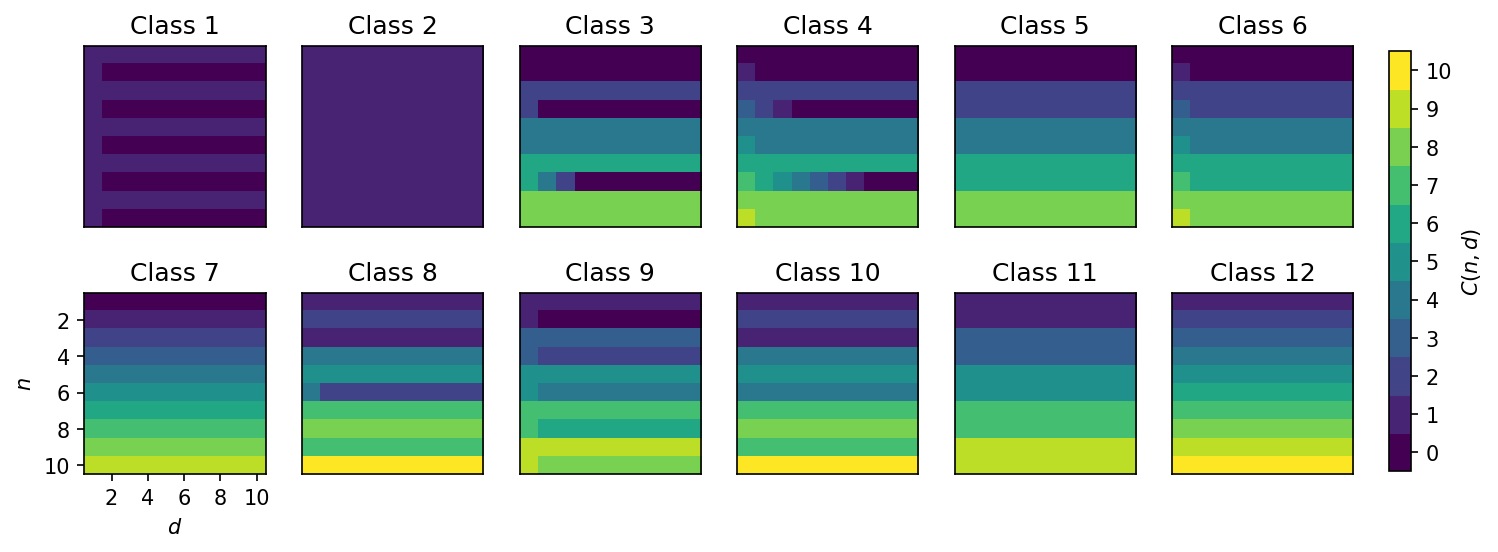}
  \caption{Transmission capacity $C(n,d)$ for a stabilizer PEPS falls into one of 13 classes. Class 12 is the class of QCA, $C(n,d)=n$. Class 0 (not shown) is the trivial class, $C(n,d)=0$.}
  \label{fig:heatmaps}
\end{figure*}

The representability of states in the toric code by PEPS is in no contradiction to their topological order. When states in the toric code are created unitarily from a tensor product state, long-range action is required. However, when projective measurement and classical communication are allowed, short-range quantum operations suffice. For example, states in the toric code can be obtained by individually measuring every other qubit in a 2D cluster state. A systematic classification of quantum states under local quantum operations and classical communication, albeit only in spatial dimension 1, has been provided in~\cite{Piroli2021}.

\subsection{Results} \label{subsec:results}

As illustrated by the above examples, the class of stabilizer PEPS contains states with significantly different capabilities for quantum information processing. Focusing on quantum wire, we immediately discern that the QCA-type maximally transmitting channel found for the cluster state is only one possible behavior. In other cases, the transmission capacity may be smaller -- as little as one qubit, independent of the circumference and distance. Our goal is to classify the transmission behaviors that arise within the class of stabilizer PEPS. Specifically, we ask: \textit{given a stabilizer PEPS, how many qubits worth of quantum information can be transmitted by local measurement, as a function of circumference and depth?} The answer to this question is given by the following theorem.

\begin{theorem} \label{thm:wire}
  For the class of cylindrical stabilizer PEPS on a 2D square lattice, there are 13 distinct behaviors of quantum wire transmission, characterized by the transmission capacity $C(n,d)$ as a function of the circumference $n$ and the depth $d$. These include the trivial class $C(n,d)=0$ and the class corresponding to Clifford quantum cellular automata, $C(n,d)=n$. The other eleven classes are represented in Fig.~\ref{fig:heatmaps}.
\end{theorem}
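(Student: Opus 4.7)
The plan is to recast the problem as linear algebra over $\mathbb{F}_2$, extract the MPO induced by the surviving symmetries after bulk $X$-measurement, and then enumerate the resulting update rules. First I would identify each Pauli operator modulo phase with a vector in a symplectic space $(\mathbb{F}_2^{2k},\omega)$. Since a stabilizer tensor is a $5$-qubit stabilizer state, its symmetry group is a maximal isotropic subspace $V\subset(\mathbb{F}_2^{10},\omega)$ of dimension $5$, with coordinates split among the physical, left, right, top, and bottom legs. Bulk $X$-measurement annihilates any symmetry whose physical component anticommutes with $X$, leaving a subspace $W\subseteq V$ of codimension $0$ or $1$ cut out by a single linear equation on the physical component.

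From $W$ I would extract the MPO: contracting $n$ copies of the tensor around the cylinder and pairing the top and bottom virtual legs through the symplectic form produces a linear map $T_n$ on the $2n$-dimensional symplectic space of column operators. By standard stabilizer counting, the transmission capacity $C(n,d)$ is an invariant of the image of $T_n^d$ together with its symplectic annihilator, so the classification reduces to that of the pair $(W,\text{contraction rule})$ up to equivalences preserving this datum. Crucially, local Clifford changes of basis on each of the four virtual legs do not affect $C(n,d)$, so one may fix a gauge on the virtual legs and present $W$ through a small number of normal-form generators.

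The next step is to enumerate the inequivalent configurations of $W$. For each configuration the generators can be chosen so that $T_n$ becomes block-circulant over $\mathbb{Z}/n\mathbb{Z}$, whose iterates are amenable to explicit Fourier analysis: the rank of $T_n^d$ decomposes as a sum of contributions from each character of $\mathbb{Z}/n\mathbb{Z}$, and these contributions depend on a handful of discrete invariants of $W$ (image and kernel dimensions on each virtual leg, together with the compatibility between horizontal and vertical Pauli content) plus number-theoretic conditions on $n$ such as parity or $\gcd$ conditions. Tabulating the combinations yields candidate transmission functions; two of them are the trivial class $C\equiv 0$ and the Clifford QCA class $C\equiv n$ (corresponding, respectively, to a zero and a bijective single-layer update), and the remaining eleven should reproduce the patterns drawn in Fig.~\ref{fig:heatmaps}.

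The main obstacle will be the gauge-orbit enumeration together with the verification that exactly $13$ functions $(n,d)\mapsto C(n,d)$ result. Completeness requires a careful classification of maximal isotropic subspaces of the $10$-dimensional symplectic space under the product of local Clifford groups on the virtual legs; this is a finite combinatorial task but one that must be executed with care so that no gauge class is overlooked. Distinctness can be checked by comparing values of $C(n,d)$ at small $(n,d)$, but one must also confirm that the number-theoretic dependence on $n$ arising from the circulant structure is absorbed into the $13$ listed classes rather than producing finer splittings, and conversely that no two seemingly distinct normal forms collapse to the same capacity function.
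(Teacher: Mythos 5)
Your proposal has the right skeleton --- symplectic $\mathbb{F}_2$ formalism, a single-layer update extracted from the surviving symmetries, gauge equivalence under local Cliffords, and a translation-invariant (circulant) structure whose rank depends on $n$ only through small divisibility conditions --- and the circulant/polynomial analysis of the $d=1$ layer is a legitimate alternative to the paper's explicit derivation of $3$-local generating sets. But there is a genuine gap in your treatment of the depth $d$. You write the $d$-layer behavior as the image of $T_n^d$, i.e.\ as the $d$-th power of a linear map. In the non-unitary cases (GHZ, toric code), which are exactly the interesting ones, the single-layer update is not a linear map on the $2n$-dimensional symplectic space: it is a partial map (equivalently, a linear relation) whose domain is the centralizer of $\mathcal{Z}_L$, and an operator's image under $T$ may fall outside that domain, at which point the chain terminates. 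The transmission capacity after $d$ layers is $n$ minus the rank of the group generated by \emph{all} partial chains $T(a_i),\dots,T^{\min(d,\gamma_i)}(a_i)$, where $\gamma_i$ is the depth at which the $i$-th chain exits the domain --- not the rank of a matrix power. Controlling this requires, at minimum, the monotonicity and plateau arguments (the capacity is non-increasing in $d$, stabilizes once it fails to decrease, and hence stabilizes by depth $C(A,n,1)\leq n$) plus the commutation data between $\mathcal{Z}_L$ and the right-edge generators; your Fourier decomposition of $T_n^d$ does not supply this, and without it the enumeration of $d$-dependence cannot be completed.

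Separately, your proposal defers rather than establishes the count of $13$: "tabulating the combinations yields candidate transmission functions" and "the remaining eleven should reproduce the patterns" assume the conclusion. The paper's actual route is to prove a reduction theorem --- agreement of $C(A_1,n,d)$ and $C(A_2,n,d)$ for all $n,d\leq 6$ implies agreement everywhere, justified by showing that the canonical single-layer data $\Phi_1$ and the commutation data $\Omega_1$ depend on $n$ only through divisibility by $2$ and $3$ once $n\geq 5$ (regular intersection of $3$-local generators) --- and then to close the argument with an exhaustive computation over the $2649$ gauge classes at $n,d\leq 6$. If you want a fully analytic proof along your lines, you must either carry out the complete tabulation (including a proof that no gauge orbit is missed and that the $n$-dependence produces no finer splitting) or prove an analogous finite-reduction statement; as written, the plan stops short of either.
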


There are 2649 distinct stabilizer PEPS up to gauge equivalence, thus Theorem \ref{thm:wire} is a significant simplification to the phenomenology of stabilizer PEPS.

\section{Classification of Quantum Wire} \label{sec:classification}

The purpose of this section is to prove Theorem \ref{thm:wire}. The proof splits into two parts: one analytical and one numerical. First, we reduce the size of our problem considerably with the following proposition.
\begin{restatable}{prop}{myprop} \label{prop:wire}
  Let $C(A, n, d)$ be the transmission capacity of a cylindrical stabilizer PEPS constructed from local tensor $A$ with circumference $n$ and depth $d$. If $C(A_1,n,d)=C(A_2,n,d)$ for $n\leq 6$ and $d\leq 6$, then $C(A_1,n,d)=C(A_2,n,d)$ for all values of $n$ and $d$.
\end{restatable}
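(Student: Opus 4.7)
The plan is to recast $C(A,n,d)$ as the $\mathbb{F}_2$-rank of a circulant linear map derived from a fixed $2\times 2$ polynomial matrix, and then use the Chinese Remainder decomposition of $\mathbb{F}_2[x]/(x^n-1)$ to show that only a bounded window of $(n,d)$ is needed to pin down the transmission behavior.

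First I would translate quantum wire into linear algebra over $\mathbb{F}_2$. After the bulk $X$ measurements, only those tensor symmetries whose physical-leg action commutes with $X$ survive, and their action on the virtual legs of a single ring of $n$ tensors defines an $\mathbb{F}_2$-linear map $T_n : \mathbb{F}_2^{2n} \to \mathbb{F}_2^{2n}$ on the symplectic Pauli space. Depth $d$ corresponds to $T_n^d$, and the transmission capacity equals half of $\operatorname{rank}_{\mathbb{F}_2}(T_n^d)$. Translation invariance together with the nearest-neighbor range of the local symmetries makes $T_n$ circulant, so it is multiplication by a fixed $2\times 2$ matrix $M(x)$ with Laurent polynomial entries supported in degrees $\{-1,0,1\}$, acting on the rank-$2$ free module over $R_n := \mathbb{F}_2[x]/(x^n-1)$. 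Crucially, $M(x)$ depends only on $A$, not on $n$, and $M(x)^d$ has entries supported in degrees $\{-d,\ldots,d\}$.

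Applying the Chinese Remainder Theorem to the factorization $x^n-1 = \prod_p p(x)^{m_p(n)}$ over $\mathbb{F}_2$ decomposes $R_n^{\,2}$ as a direct sum indexed by the distinct irreducible factors $p$, and $\operatorname{rank}(T_n^d)$ splits as a sum of local ranks, each determined by $M(x)^d \bmod p(x)^{m_p(n)}$. Since $M(x)^d$ has Laurent degree at most $d$, only factors $p$ of degree at most $d$ contribute, and the local rank modulo $p(x)^{m_p(n)}$ stops changing once $m_p(n)$ exceeds a threshold bounded in terms of $d$ and $\deg p$.

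The main obstacle, and the decisive step, is to verify that the window $n,d \leq 6$ is actually sufficient. The factorizations of $x^n-1$ over $\mathbb{F}_2$ for $n=1,\ldots,6$ already exhibit every irreducible factor small enough to be visible in $M(x)^d$ at these depths, namely $(x+1)$, $(x^2+x+1)$, and $(x^4+x^3+x^2+x+1)$, and they do so with multiplicities (up to $4$, $2$, $1$ respectively) sufficient to saturate the stabilization of the local ranks. For $d>6$ a Cayley--Hamilton-style argument on each local ring shows that the Smith-normal-form invariants of $M(x)^d$ modulo $p(x)^{m_p}$ are eventually periodic in $d$, with a period already detected within $d\leq 6$. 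Combining these two stabilization statements, the finite array $\{C(A,n,d) : n,d\leq 6\}$ pins down $\operatorname{rank}(M(x)^d \bmod p(x)^{m})$ for every relevant $(p,m)$, and hence pins down $C(A,n,d)$ for all $n$ and $d$.
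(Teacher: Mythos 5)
Your reduction to powers of a circulant matrix fails at the first step. After the $X$ measurements, a single ring of tensors does not define an $\mathbb{F}_2$-linear map $T_n$ on the full symplectic space $\mathbb{F}_2^{2n}$: it defines only a linear \emph{relation} (equivalently, a partial map whose domain is $S_L$, whose kernel is $\mathcal{Z}_L$, and whose image is $P_R$). This is exactly the point of the GHZ and toric-code examples — the MPO is not a QCA. Composing $d$ layers is therefore not matrix multiplication: an operator $y$ in the image of one layer propagates to the next layer only if $y$ commutes with $\mathcal{Z}_L$, i.e.\ lies in the domain $S_L$, and otherwise the chain terminates. Any linear extension of $T$ to all of $\mathbb{F}_2^{2n}$ (e.g.\ zero on a complement of $S_L$) depends on an arbitrary choice of splitting and its powers do not compute the composite relation, so the identity $C(A,n,d)=\tfrac12\operatorname{rank}(T_n^d)$ has no basis. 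Capturing the domain condition is precisely why the argument needs a second invariant beyond the single-layer canonical form — the commutation data between $\mathcal{Z}_L$ and $S_R$ — and your proposal has no analogue of it.

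Even granting the circulant framework, the window argument does not close on its own terms. You assert that irreducible factors of $x^n-1$ of degree up to $d$ can contribute to the rank of $M(x)^d$, and then claim $n\le 6$ exhibits all of them; but the degree-$3$ irreducibles $x^3+x+1$ and $x^3+x^2+1$ first appear in $x^7-1$, and the primitive degree-$4$ ones in $x^{15}-1$, so they are invisible for $n\le 6$ while being "small enough to contribute" at $d\le 6$ by your own criterion. Ruling them out requires knowing that the relevant polynomials actually have degree at most $2$ (equivalently, that the local generators of $S_L$ are supported on at most three adjacent sites, so that only divisibility of $n$ by $2$ and $3$ can matter). That fact is not automatic from translation invariance; it is the output of an explicit structural analysis of the $[5,1]$ stabilizer tensors (recursively canonicalizing the on-site symmetry group over the horizontal/vertical and up/down partitions), which your proposal assumes rather than proves. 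The multiplicity/stabilization-in-$d$ step is likewise asserted without a bound: the elementary-divisor valuations of $M^d$ grow with $d$, so "eventual periodicity detected within $d\le 6$" needs an argument; the correct mechanism is that the capacity is non-increasing in $d$ and, once it fails to decrease, is constant thereafter, so it stabilizes after at most $C(A,n,1)$ steps.
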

With Proposition \ref{prop:wire} in hand, the proof of Theorem \ref{thm:wire} is a direct consequence of the following lemma.
\begin{lemma}
  For $n\leq 6$ and $d\leq 6$, the transmission capacity $C(n,d)$ must take one of the 12 forms represented in Fig. \ref{fig:heatmaps}, or the trivial form $C(n,d)=0$. 
\end{lemma}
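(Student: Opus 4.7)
The plan is to prove the lemma by exhaustive enumeration. Since both $n$ and $d$ are bounded by 6, and since the space of local stabilizer tensors is finite once we quotient by gauge, it suffices to build every stabilizer PEPS up to gauge equivalence, compute the $6\times 6$ array $C(n,d)$ for each, and check that the number of distinct arrays obtained is exactly 13, matching the patterns of Fig.~\ref{fig:heatmaps} together with the trivial class.

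First I would parametrize the set of stabilizer tensors. By the correspondence of Fig.~\ref{fig:qubits}, a stabilizer tensor is a 5-qubit stabilizer state, i.e.\ a Lagrangian subspace of $\mathbb{F}_2^{10}$ in the symplectic representation. The total number of such states is finite and can be enumerated directly. I would then identify the group of gauge transformations on the local tensor that leave $C(n,d)$ invariant: namely, Clifford conjugations on each virtual leg (which cancel in pairs when two neighboring tensors are contracted), together with any operation on the physical leg that commutes with the fixed $X$ measurement. Quotienting gives the stated 2649 equivalence classes, a count I would verify by constructing canonical representatives (for example, via a Gaussian-elimination normal form on each Lagrangian) and binning tensors by this form.

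For each representative $A$ and each $(n,d)$ with $n,d\leq 6$, I would build the cylindrical PEPS, apply $X$ measurements at every bulk site, and read off $C(n,d)$ from the stabilizer formalism. Concretely, the post-measurement state is a stabilizer state on the $2n$ edge qubits, and the transmission capacity equals half the rank of the induced symplectic form on the restriction of its stabilizer group to the left and right edges. All of these are $\mathbb{F}_2$ linear-algebra operations of polynomial cost, so the full $6\times 6$ array can be filled in routinely for each of the 2649 tensors.

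The final step is simply to collect the resulting arrays and observe that exactly 13 distinct ones appear, matching those in Fig.~\ref{fig:heatmaps}. The main obstacle, as I see it, is not the per-tensor computation of $C(n,d)$ but the faithful enumeration of gauge classes: one must fix the correct notion of gauge equivalence (coarse enough to respect the $X$-measurement protocol but fine enough not to collapse inequivalent transmission behaviors), construct a unique canonical form for each class, and certify that no class is missed. Once the enumeration is trusted, combining this finite check with Proposition~\ref{prop:wire} immediately yields Theorem~\ref{thm:wire}.
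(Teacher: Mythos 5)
Your proposal is essentially identical to the paper's proof, which is simply a computer-aided exhaustive check over the 2649 gauge-equivalence classes of stabilizer tensors, computing $C(n,d)$ for $n,d\leq 6$ via $\mathbb{F}_2$ stabilizer arithmetic and observing that exactly 13 distinct arrays occur. Your write-up fills in the implementation details (Lagrangian enumeration, canonical forms, reading $p$ off the rank of the symplectic form restricted to one edge) that the paper delegates to its published code, but the approach is the same.
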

\begin{proof}
 The proof is by computer. Code to reproduce our results is available at~\cite{2dwire}.
\end{proof}
Below, we set out to prove Proposition \ref{prop:wire}.

\subsection{Background, notation, and definitions} \label{subsec:background}

The single-qubit Pauli group $\mathcal{P}_1$ is the group of observables generated by the Pauli $X$ and $Z$ operators: $\mathcal{P}_1 = \expval{X, Z}$. The $n$-qubit Pauli    group is the $n$-fold tensor product of $\mathcal{P}_1$. An $n$-qubit stabilizer group is an abelian subgroup $S\subset \mathcal{P}_n$ such that $-I \notin S$, and an $n$-qubit stabilizer state is an $n$-qubit quantum state uniquely specified by a stabilizer group. Mappings between stabilizer states correspond to elements of the Clifford group, which is the normalizer of $\mathcal{P}_n$ in $U(2^n)$, defined here modulo a complex phase: $\mathcal{C}_n = \{U:U\mathcal{P}_nU^\dagger = \mathcal{P}_n\}/U(1)$. 

We also make use of the symplectic representation of the Pauli group by mapping elements of $\mathcal{P}_n$ to binary vectors $\xi = (\xi^X, \xi^Z) \in \mathbb{F}_2^{2n}$, such that $g(\xi) = \bigotimes_{i=1}^n X^{\xi_i^X}Z^{\xi_i^Z} \in \mathcal{P}_n$~\cite{Stephen2019}. Multiplication of Pauli operators corresponds to addition of binary vectors, and commutation relations are captured by a symplectic form $\omega(\xi, \eta) = \sum_{i=1}^n \xi^X_i \eta^Z_i + \xi^Z_i \eta^X_i$ modulo 2 such that $g(\xi)g(\eta) = (-1)^{\omega(\xi,\eta)}g(\eta)g(\xi)$. The vector space $\mathbb{F}_2^{2n}$ equipped with the symplectic form $\omega$ is called a symplectic vector space, and it is isomorphic to $\mathcal{P}_n$ up to phase, i.e., $\mathbb{F}_2^{2n} \cong \mathcal{P}_n/U(1)$. A stabilizer group corresponds to an isotropic subspace of $\mathbb{F}_2^{2n}$ with respect to $\omega$, and a stabilizer state to a maximally isotropic subspace.

Consider the $\ell$-qubit stabilizer states. There is a natural mapping from these states to PEPS tensors with $k$ physical legs of dimension $D=2$ and $\ell-k$ virtual legs of dimension $\chi=2$. Namely, we assign each qubit of the stabilizer state to one leg of the tensor and re-frame the stabilizer group as the symmetry group of the tensor. This completely specifies the components of the tensor in the same way that a stabilizer state is completely specified by its symmetries. 
\begin{definition}
  A \textbf{stabilizer tensor} is a PEPS tensor whose components are completely specified by a set of symmetry constraints, such that the same symmetry constraints also specify a stabilizer state. An $[\ell,k]$ stabilizer tensor has $\ell$ legs in total and $k$ physical legs. A \textbf{stabilizer PEPS} is a translation-invariant PEPS where the local tensor is a stabilizer tensor. 
\end{definition}

In this paper we work with a square 2D lattice, so we consider the $[5,1]$ stabilizer tensors. Two stabilizer tensors $A_1$ and $A_2$ generate the same stabilizer PEPS if they are related by a gauge transformation:
\begin{equation}
  \includegraphics[valign=c, scale=0.75]{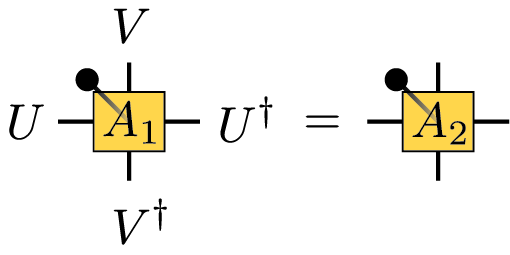}.
\end{equation}
where $U, V \in \mathcal{C}_1$. Up to gauge equivalence, there are 2649 distinct stabilizer PEPS.

We denote a cylindrical stabilizer PEPS $\mathcal{A}(A,n,d)$ with local stabilizer tensor $A$, circumference $n$, and depth $d$. Upon measurement of every physical spin in the $X$ basis, we are left with a stabilizer state on the $2n$ virtual legs at the edges. The stabilizer group for this state is $S(A,n,d)$. $S$ has a natural bipartition into the two open edges of the cylinder; we write $S_L$ and $S_R$ to denote the restriction of $S$ to the left and right edges, respectively. 

The generators of a bipartite stabilizer group such as $S(A,n,d)$ can always be brought to the following canonical form, which reveals the entanglement structure of the corresponding stabilizer state~\cite{Fattal2004}:
\begin{equation} \label{eq:canonical stabs}
 \begin{aligned}
 \mathcal{Z}_L(A,n,d) 
 &= 
 \begin{cases}
 a_i \otimes I \\
 \end{cases} \qquad 1 \leq i \leq n-p \\
 P_{LR}(A,n,d) &=   
 \begin{cases}
 g_k^L \otimes g_k^R \\
 \bar{g}_k^L \otimes \bar{g}_k^R
 \end{cases} \qquad 1 \leq k \leq p \\
 \mathcal{Z}_R(A,n,d) 
 &= 
 \begin{cases}
 I \otimes b_j \\
 \end{cases} \qquad 1 \leq j \leq n-p.
 \end{aligned}
\end{equation}
Here $\mathcal{Z}_L$ is the center of $S_L$ and acts trivially on partition $R$, while $\mathcal{Z}_R$ is the center of $S_R$ and acts trivially on partition $L$. $P_{LR}$ is generated by $p$ locally anticommuting pairs, meaning that $(g_k^L, \bar{g}_k^L)$ anticommute with each other but commute with all other $g_\ell^L$ and $\bar{g}_\ell^L$ for $\ell\neq k$. The same relationship holds for $(g_k^R, \bar{g}_k^R)$, so that $S_{LR}(n,d)$ is abelian when we look at both partitions simultaneously. We write $P_L$ and $P_R$ for the restriction of $P_{LR}$ to the left and right partitions, respectively.

Later, it will be convenient to encapsulate all the information about the canonical form of $S(A,n,d)$ with a single mathematical object. To this end, we denote the canonical form of $S(A,n,d)$ by $\Phi_d(A,n)$. Every $\Phi_d$ with the same $(n,p)$ is equivalent to $p$ Bell pairs encoded by an $[n,p]$ stabilizer code. Therefore, the transmission capacity $C(A,n,d)$ is equal to $p$, and $\Phi_d$ naturally separate into equivalence classes labelled by the tuple $(n,p)$. 

\begin{definition}
  The \textbf{transmission capacity} $C(A,n,d)$ for a stabilizer PEPS $\mathcal{A}(A,n,d)$ with local stabilizer tensor $A$, depth $d$, and circumference $n$ is the number of anticommuting pairs $p$ in the canonical form of the bipartite stabilizer group $S(A,n,d)$.
\end{definition}

\subsection{Proof of Proposition \ref{prop:wire}} \label{subsec:proof}

In this section, we prove Proposition \ref{prop:wire}. Here, we provide a brief overview of the proof.

First, we generalize the update rule corresponding to a single layer of a stabilizer PEPS so that it can deal with non-unitary updates, like those found for the GHZ state and toric code. Then, we use our generalized update rule to analyze the concatenation of many individual layers, and show that $C(n,d)$ is completely determined by two mathematical objects, both of which may be derived from a single layer of the tensor network. These objects are the canonical stabilizer generators for a single layer, and the commutation relations between canonical stabilizer generators for a single layer. We denote them $\Phi_1$ and $\Omega_1$, respectively. 

Both $\Phi_1$ and $\Omega_1$ are independent of $d$ but remain functions of $n$; we remove this dependence by deriving a local generating set for the stabilizer group of a single layer. Relating the local stabilizer generators to the canonical stabilizer generators allows us to show that if there are two different stabilizer tensors $A_1$ and $A_2$ with $\Phi_1(A_1,n)=\Phi_1(A_2,n)$ and $\Omega_1(A_1,n)=\Omega_1(A_2,n)$ for $n\leq 6$, then $\Phi_1(A_1)=\Phi_1(A_2)$ and $\Omega_1(A_1)=\Omega_1(A_2)$ for all $n$. 

\subsubsection{Non-unitary update rule}

Consider the MPO $\mathcal{A}(A,n,1)$. For compactness of notation, we draw $\mathcal{A}$ as a single tensor with higher bond dimension:
\begin{equation} \label{eqf:quasi 1d}
 \includegraphics[valign=c, scale=0.75]{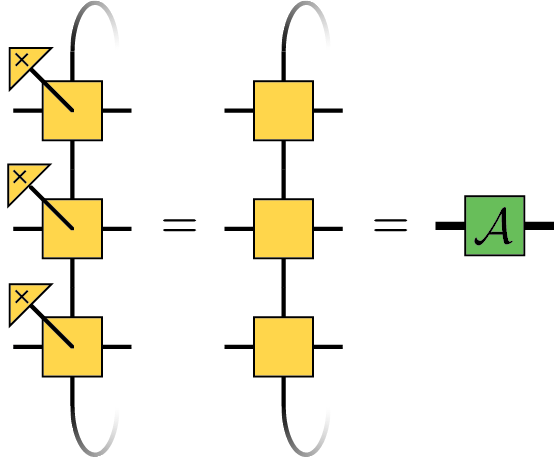}.
\end{equation}
Note that we have removed the physical legs of every tensor, because measurement in the $X$ basis contracts these legs with an $X$ eigenstate. Strictly speaking, the action of $\mathcal{A}$ depends on the measurement outcomes at each site. However, the only effect that different measurement outcomes may have on the stabilizer generators for $\mathcal{A}$ is to change some of their signs. This change does not affect the number of anticommuting pairs in the canonical form \eqref{eq:canonical stabs}, i.e., it does not affect the transmission capacity. Therefore, without loss of generality we contract every physical leg with the $\ket{+}$ state. To be clear, any tensor drawn without a physical leg is assumed to be contracted with $\ket{+}$.

The stabilizer group of $\mathcal{A}(A,n,1)$ is $S(A,n,1)$. Bringing $S$ to canonical form gives the following symmetries for $\mathcal{A}$:
\begin{equation} \label{eqf:canonical_symms}
 \includegraphics[valign=b, scale=0.75]{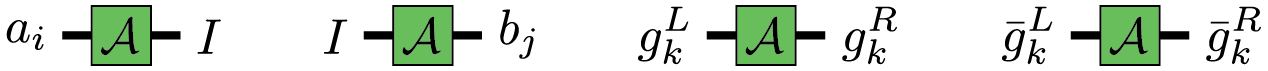}.
\end{equation}
In the spirit of QCA, we use Eq. \ref{eqf:canonical_symms} to define an update rule $T$ such that
\begin{equation} \label{eq:T_nonunitary}
 T(a_i) = I; \quad T(g_k^L) = g_k^R; \quad T(\bar{g}_k^L) = \bar{g}_k^R.
\end{equation}
$T$ is non-unitary unless every generator of $S$ is part of an anticommuting pair, in which case we recover a QCA by identifying $g_k^L$ and $\bar{g}_k^L$ with encoded Pauli operators $\tilde{X}_k$ and $\tilde{Z}_k$, respectively. 

Unitarity for any value of $p$ can be restored by modifying the update rule such that
\begin{equation} \label{T_unitary}
 T(a_i) = b_i, \quad T(g_k^L) = g_k^R; \quad T(\bar{g}_k^L) = \bar{g}_k^R.
\end{equation}
However, note that $T$ is not defined on the full $n$-qubit Hilbert space. Instead, it maps the subspace $S_L$ to the subspace $S_R$. If $S_L\neq S_R$, then some operators in $S_L$ have an image under $T$ that falls outside the domain of $T$. A simple test to determine whether an operator is in the domain of $T$ follows directly from the following lemma. 
\begin{lemma} \label{lem:centralizer}
 $S_L(A,n,d)$ is the centralizer of $\mathcal{Z}_L(A,n,d)$ in $\mathcal{P}_n$, and $S_R(A,n,d)$ is the centralizer of $\mathcal{Z}_R(A,n,d)$ in $\mathcal{P}_n$.
\end{lemma}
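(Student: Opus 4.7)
My plan is to prove the left-edge statement $S_L$ equals the centralizer of $\mathcal{Z}_L$ in $\mathcal{P}_n$ by a containment-plus-dimension-count argument in the symplectic picture of Section~\ref{subsec:background}, and then obtain $S_R$ equals the centralizer of $\mathcal{Z}_R$ by the mirror argument with $L$ and $R$ swapped. In that picture $S$ corresponds to a maximally isotropic subspace $\mathcal{S} \subseteq \mathbb{F}_2^{2n} \oplus \mathbb{F}_2^{2n}$ of dimension $2n$, and the centralizer in $\mathcal{P}_n/U(1) \cong \mathbb{F}_2^{2n}$ of any subspace $V$ is its symplectic complement $V^\perp$, of dimension $2n - \dim V$.

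For the containment $S_L \subseteq \mathcal{Z}_L^\perp$, I would pick $(s_L, s_R) \in \mathcal{S}$ and $v \in \mathcal{Z}_L$; by the canonical form \eqref{eq:canonical stabs} the latter sits inside $\mathcal{S}$ as $(v, 0)$. Isotropy of $\mathcal{S}$ forces $\omega(v, s_L) + \omega(0, s_R) = 0$, i.e.\ $\omega(v, s_L) = 0$. Letting $(s_L, s_R)$ range over $\mathcal{S}$ and $v$ over $\mathcal{Z}_L$ gives the inclusion.

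For the matching dimension count, the left projection $\pi_L : \mathcal{S} \to \mathbb{F}_2^{2n}$ has image $S_L$ and kernel exactly the elements of $\mathcal{S}$ trivial on the left, which by the canonical form are precisely $\mathcal{Z}_R$. Reading the generator counts off \eqref{eq:canonical stabs} gives $\dim \mathcal{Z}_L = \dim \mathcal{Z}_R = n - p$, whence $\dim S_L = 2n - (n-p) = n + p = 2n - \dim \mathcal{Z}_L = \dim \mathcal{Z}_L^\perp$. Matching dimensions plus the containment force equality; the argument repeats verbatim with $L$ and $R$ swapped for the right edge.

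The only point that really requires care is the identification $\ker \pi_L = \mathcal{Z}_R$: one must check that the only elements of $\mathcal{S}$ with trivial left factor are generated by the $I \otimes b_j$, and not something larger. This is immediate from the canonical form \eqref{eq:canonical stabs}, so in the end the proof is essentially a restatement of ``stabilizer group $=$ maximally isotropic subspace'' together with the standard identification of the Pauli centralizer with the symplectic complement.
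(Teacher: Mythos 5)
Your proof is correct and takes essentially the same route as the paper's: establish the containment $S_L \subseteq \mathcal{Z}_L^\perp$ and then match dimensions using $\dim(W) + \dim(W^\perp) = 2n$ in the symplectic picture. You simply supply more detail than the paper does --- the isotropy argument for the containment and the rank--nullity computation of $\dim(S_L)$ via the kernel $\mathcal{Z}_R$ of the left projection --- where the paper reads both dimensions directly off the canonical form \eqref{eq:canonical stabs}.
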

\begin{proof}
 Without loss of generality we prove the statement for $S_L$ and $\mathcal{Z}_L$, as the same argument applies to the right partition. Switching to the symplectic representation of $\mathcal{P}_n$, we have $\mathcal{Z}_L$ an isotropic subspace and $S_L \subseteq \mathcal{Z}_L^\perp$. By Eq. \ref{eq:canonical stabs}, the dimensions of these subspaces are $\dim(\mathcal{Z}_L) = n-p$ and $\dim(S_L) = n+p$. 
 
 For any symplectic vector space $V$ and subspace $W \subset V$ it is known that $\dim(V) = \dim (W) + \dim (W^\perp)$. Adding dimensions, we have $\dim(\mathcal{Z}_L) + \dim(S_L) = 2n = \dim(\mathcal{P}_n)$. Therefore $S_L = \mathcal{Z}_L^\perp$. In other words, $S_L$ centralizes $\mathcal{Z}_L$ in $\mathcal{P}_n$.
\end{proof}
\begin{corollary} \label{cor:T_image_domain}
  For any $x \in \mathcal{P}_n$, $x \in S_L(A,n,d)$ if and only if $x$ commutes with $\mathcal{Z}_L(A,n,d)$.
\end{corollary}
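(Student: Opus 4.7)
The plan is to observe that Corollary \ref{cor:T_image_domain} is essentially a restatement of Lemma \ref{lem:centralizer} in "commutation" language, so the proof reduces to unpacking the definition of centralizer. First I would invoke the lemma to write $S_L(A,n,d) = C_{\mathcal{P}_n}(\mathcal{Z}_L(A,n,d))$, where $C_G(H)$ denotes the centralizer of the subgroup $H$ in $G$. Then I would recall that, by definition, $C_{\mathcal{P}_n}(\mathcal{Z}_L)$ is the set of $x \in \mathcal{P}_n$ such that $xz = zx$ for every $z \in \mathcal{Z}_L$, i.e., the set of $x$ that commute with every element of $\mathcal{Z}_L$. Chaining these two statements gives the biconditional in both directions.

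In practice one only needs to check commutation against a generating set for $\mathcal{Z}_L$ rather than every element, since $\mathcal{Z}_L$ is a group and commutation is preserved under products of Pauli operators. In the symplectic picture set up in Section \ref{subsec:background}, the test then reduces to verifying $\omega(\xi, \zeta_i) = 0$ for $\xi$ the symplectic vector of $x$ and $\zeta_i$ the symplectic vectors of the generators $a_i$ of $\mathcal{Z}_L$ listed in Eq. \ref{eq:canonical stabs}. Presenting the corollary in this "commutes with" form is presumably the intended use, since it furnishes the concrete, computable membership test for the domain of the non-unitary update rule $T$ that will be needed when layers are concatenated in the remainder of Section \ref{subsec:proof}.

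The only real obstacle has already been overcome in Lemma \ref{lem:centralizer}, namely the dimension count showing that $\dim(S_L) + \dim(\mathcal{Z}_L) = 2n$, which upgrades the automatic inclusion $S_L \subseteq \mathcal{Z}_L^{\perp}$ to the equality $S_L = \mathcal{Z}_L^{\perp}$. Once that equality is in hand, the corollary is purely definitional, and no further argument is required beyond translating between the group-theoretic phrase "is the centralizer of" and the pointwise phrase "commutes with."
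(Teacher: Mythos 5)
Your proposal is correct and follows essentially the same route as the paper: both rest entirely on Lemma \ref{lem:centralizer} and then unpack the definition of centralizer to obtain the biconditional (the paper phrases the forward direction via $\mathcal{Z}_L$ being the center of $S_L$, but this is the same content). Your added remarks about checking commutation only against generators and the symplectic test $\omega(\xi,\zeta_i)=0$ are accurate and consistent with how the corollary is used in Algorithm 1.
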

\begin{proof}
  $\mathcal{Z}_L$ is the center of $S_L$, therefore $x \in S_L$ implies that $x$ commutes with every $y\in\mathcal{Z}_L$. By Lemma \ref{lem:centralizer}, $S_L$ is the centralizer of $\mathcal{Z}_L$ in $\mathcal{P}_n$, therefore the converse is also true.
\end{proof}

\subsubsection{Chains of symmetry}

Any symmetry of $\mathcal{A}(A,n,d)$ can be derived by chaining symmetries, i.e., concatenating $d$ copies of $\mathcal{A}(A,n,1)$. We use the chaining process to characterize the group $\mathcal{Z}_R(A,n,d)$, because one way to determine the channel capacity is $C(A,n,d) = n - \rank[\mathcal{Z}_R(A,n,d)]$. 

Elements of $\mathcal{Z}_R(A,n,d)$ must act as the identity on the left edge of $\mathcal{A}(A,n,d)$ and non-trivially on the right edge. Therefore, they have the following form: 
\begin{equation} \label{eqf:Z_chain_symms}
    \includegraphics[valign=c, scale=0.75]{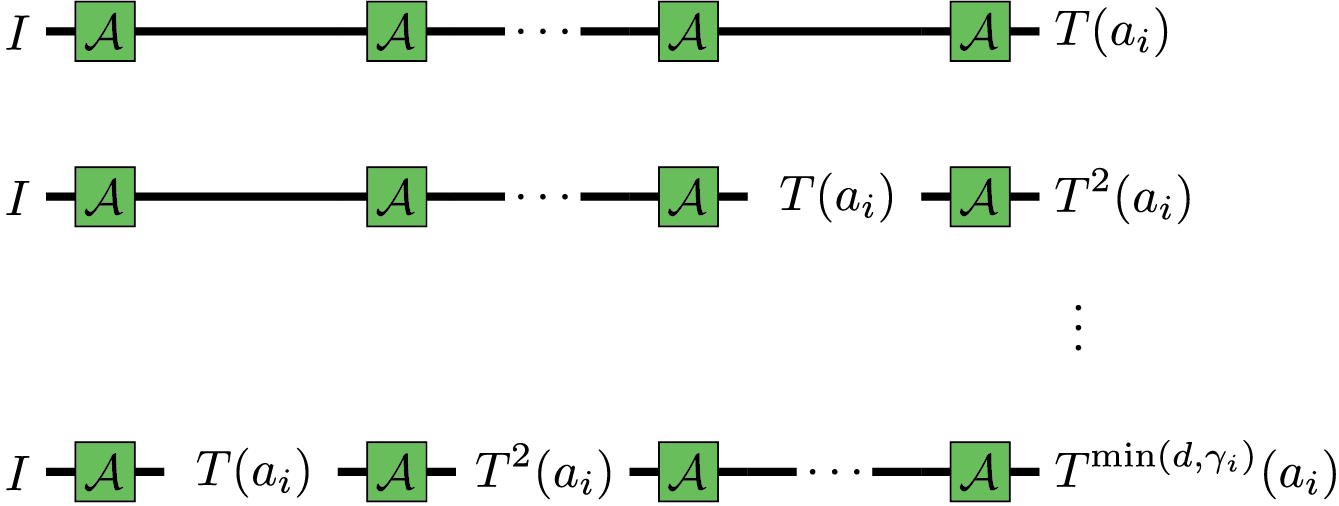}
\end{equation}
where $a_i \in \mathcal{Z}_L(A,n,1)$, $1\leq i \leq n-p$, and $\gamma_i$ is the critical depth such that $T^{\gamma_i}(a_i) \notin S_L(A,n,1)$. Recall that this implies $T^{\gamma_i+1}(a_i)$ is not a symmetry. Reading off the rightmost column of \eqref{eqf:Z_chain_symms}, it is clear that $\mathcal{Z}_R(A,n,d)$ is generated by the set $\mathcal{T}_R(n,d) = \qty{T(a_i), \:\dots, \:T^{\min(d,\gamma_i)}(a_i) \:|\: 1\leq i \leq n-p}.$ Note that $\mathcal{Z}_R(A,n,d)$ is invariant for $d \geq \max_i (\gamma_i)$, because every chain in \eqref{eqf:Z_chain_symms} may be extended to the left by the trivial symmetry, but no chain may be extended to the right once $d \geq \max_i(\gamma_i)$.

\begin{lemma} \label{lem:step_down}
  For any fixed value of $n$, $C(A,n,d) \geq C(A,n,d+1)$.
 \end{lemma}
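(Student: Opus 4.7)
The plan is to leverage the characterization of $\mathcal{Z}_R(A,n,d)$ developed immediately above the lemma, together with the identity $C(A,n,d) = n - \operatorname{rank}[\mathcal{Z}_R(A,n,d)]$ that follows from the canonical form \eqref{eq:canonical stabs} (since $\mathcal{Z}_R$ has exactly $n-p$ independent generators). The strategy is therefore to show that $\mathcal{Z}_R$ can only grow, as a subgroup of $\mathcal{P}_n$, when the depth increases.

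First, I recall that by \eqref{eqf:Z_chain_symms} and the discussion following it, $\mathcal{Z}_R(A,n,d)$ is generated by the set
\begin{equation*}
 \mathcal{T}_R(n,d) = \bigl\{\, T^j(a_i) \,:\, 1 \leq i \leq n-p,\ 1 \leq j \leq \min(d,\gamma_i) \,\bigr\},
\end{equation*}
where $\{a_i\}$ generate $\mathcal{Z}_L(A,n,1)$ and $\gamma_i$ is the critical depth for the $i$-th chain. The key observation is that $\mathcal{T}_R(n,d) \subseteq \mathcal{T}_R(n,d+1)$: passing from depth $d$ to depth $d+1$ only widens the allowed range of $j$ for each $i$, either adjoining the new element $T^{d+1}(a_i)$ when $d+1 \leq \gamma_i$, or adjoining nothing when $d \geq \gamma_i$. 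Consequently $\mathcal{Z}_R(A,n,d) \subseteq \mathcal{Z}_R(A,n,d+1)$, and so $\operatorname{rank}[\mathcal{Z}_R(A,n,d)] \leq \operatorname{rank}[\mathcal{Z}_R(A,n,d+1)]$. Subtracting from $n$ yields $C(A,n,d) \geq C(A,n,d+1)$.

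The only step requiring any care is the rank identity $C = n - \operatorname{rank}[\mathcal{Z}_R]$. This follows from \eqref{eq:canonical stabs} together with the fact that the $b_j$'s generating $\mathcal{Z}_R$ are independent by construction of the canonical form, but it deserves a one-line justification so the reader can verify that the chain counting really translates into a statement about $C$. Aside from that bookkeeping, the proof is essentially immediate from the chaining description already in hand; there is no substantive obstacle, which is why the authors likely present this lemma as a short corollary of the preceding analysis rather than as an independent argument.
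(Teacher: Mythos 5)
Your proof is correct and takes essentially the same route as the paper: both use the identity $C(A,n,d) = n - \operatorname{rank}[\mathcal{T}_R(A,n,d)]$ and the monotone inclusion $\mathcal{T}_R(A,n,d) \subseteq \mathcal{T}_R(A,n,d+1)$ coming from the chain construction \eqref{eqf:Z_chain_symms}. No substantive difference.
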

 \begin{proof}
  $\mathcal{T}_R(A,n,d)$ generates $\mathcal{Z}_R(A,n,d)$, therefore 
  \begin{equation} \label{eq:C from TR}
  C(A,n,d) = n - \rank [\mathcal{T}_R(A,n,d)].
  \end{equation}
  Any of the chains \eqref{eqf:Z_chain_symms} may be extended indefinitely to the left by the trivial symmetry, therefore $\mathcal{T}_R(A,n,d) \subseteq \mathcal{T}_R(A,n, d+1)$ and $\rank [\mathcal{T}_R(A,n,d)] \leq \rank [\mathcal{T}_R(A,n,d+1)]$. Substituting \eqref{eq:C from TR} into this inequality gives $C(A,n,d) \geq C(A,n,d+1)$.
 \end{proof}
 \begin{lemma} \label{lem:plateau}
  For any fixed value of $n$, if $C(A,n,d_c) = C(A,n,d_c+1)$ at some critical depth $d_c$, then $C(A,n,d) = C(A,n,d_c)$ for all $d \geq d_c$. 
 \end{lemma}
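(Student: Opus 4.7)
My plan is to show that once the subspace $\mathcal{Z}_R(A,n,d)$ stops growing as $d$ increases, it remains constant for all larger $d$. Using the identity $C(A,n,d) = n - \rank[\mathcal{T}_R(A,n,d)]$ from Eq.~\eqref{eq:C from TR}, together with the fact that $\mathcal{T}_R(A,n,d)$ generates $\mathcal{Z}_R(A,n,d)$, this immediately yields $C(A,n,d) = C(A,n,d_c)$ for all $d \geq d_c$.

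The key ingredient I would establish is a one-step recurrence
\[
\mathcal{Z}_R(A,n,d+1) = \mathcal{Z}_R(A,n,d) + T(\mathcal{Z}_R(A,n,d) \cap S_L(A,n,1)),
\]
where $T$ is the linear update map on $S_L(A,n,1)$. The ``$\subseteq$'' inclusion follows from inspection of the generating set $\mathcal{T}_R$: the old generators already lie in $\mathcal{Z}_R(A,n,d)$, and each new generator $T^{d+1}(a_i)$ (present exactly when $\gamma_i \geq d+1$) equals $T$ applied to $T^d(a_i) \in \mathcal{Z}_R(A,n,d) \cap S_L(A,n,1)$. The ``$\supseteq$'' inclusion is obtained by chain extension: given $x \in \mathcal{Z}_R(A,n,d) \cap S_L(A,n,1)$, pick any symmetry of $\mathcal{A}(A,n,d)$ producing identity on the left edge and $x$ on the right; since $x \in S_L(A,n,1)$, appending one more column of $\mathcal{A}$ yields a symmetry of $\mathcal{A}(A,n,d+1)$ whose right-edge action equals $T(x)$ modulo an element of $\mathcal{Z}_R(A,n,1) \subseteq \mathcal{Z}_R(A,n,d+1)$.

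With the recurrence in hand, the lemma follows by a short induction on $d \geq d_c$. Feeding the hypothesis $\mathcal{Z}_R(A,n,d_c) = \mathcal{Z}_R(A,n,d_c+1)$ into the recurrence at depth $d_c$ yields the closure property $T(\mathcal{Z}_R(A,n,d_c) \cap S_L) \subseteq \mathcal{Z}_R(A,n,d_c)$. Assuming inductively that $\mathcal{Z}_R(A,n,d) = \mathcal{Z}_R(A,n,d_c)$ for some $d \geq d_c$, the recurrence at depth $d$ combined with closure gives
\[
\mathcal{Z}_R(A,n,d+1) = \mathcal{Z}_R(A,n,d_c) + T(\mathcal{Z}_R(A,n,d_c) \cap S_L) = \mathcal{Z}_R(A,n,d_c),
\]
closing the induction. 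Taking ranks then gives $C(A,n,d) = C(A,n,d_c)$ for all $d \geq d_c$.

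The main obstacle I anticipate is the ``$\supseteq$'' direction of the recurrence, which must be argued at the level of symmetry operators of $\mathcal{A}(A,n,d)$ rather than in terms of the formal iterates $T^\ell(a_i)$. A generic element of $\mathcal{Z}_R(A,n,d) \cap S_L$ is a sum of generators $T^\ell(a_i)$, some of which (those with $\ell = \gamma_i$) may individually fall outside $S_L$; only their sum is guaranteed to lie in $S_L$. The cleanest way to handle this is to view $T$ as the linear map $S_L \to S_R$ induced by the canonical symmetries \eqref{eqf:canonical_symms}, so that $T$ has an unambiguous action on every element of $S_L$, and to realize this action geometrically as chain extension in the PEPS. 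Once that identification is nailed down, the rest of the argument is routine.
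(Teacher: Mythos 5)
Your proof is correct and takes essentially the same approach as the paper's: both rest on the observation that $C(A,n,d_c)=C(A,n,d_c+1)$, together with the nesting $\mathcal{Z}_R(A,n,d)\subseteq\mathcal{Z}_R(A,n,d+1)$, forces $T$-closure of $\mathcal{Z}_R(A,n,d_c)$, which then fixes $\mathcal{Z}_R$ (hence the rank and the capacity) for all larger depths. The paper compresses this into one sentence; your explicit one-step recurrence and induction simply supply the details it leaves implicit.
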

 \begin{proof}
  $C(A,n,d_c) = C(A,n,d_c+1)$ requires that $T$ map $\mathcal{Z}_R(A,n, d_c)$ to itself, therefore $\mathcal{Z}_R(A,n, d_c)$ is fixed for $d \geq d_c$. Recall that $C(A,n,d) = n - \rank[\mathcal{Z}_R(A,n,d)]$. Therefore, $C(A,n,d)$ is also fixed for all $d \geq d_c$.
 \end{proof}
 \begin{lemma} \label{lem:max_depth}
  For any fixed value of $n$, the maximum value of $d_c$ is $C(A,n,1)$.
 \end{lemma}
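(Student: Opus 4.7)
The plan is to prove this by a short pigeonhole argument on the integer-valued sequence $d \mapsto C(A,n,d)$, leveraging Lemmas \ref{lem:step_down} and \ref{lem:plateau}.

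First, I would combine those two lemmas to read off the structure of $\{C(A,n,d)\}_{d \geq 1}$: by Lemma \ref{lem:step_down} the sequence is non-increasing in $d$, by Lemma \ref{lem:plateau} the first coincidence of two consecutive terms locks the sequence at that value for all larger $d$, and by construction the values lie in $\{0,1,\ldots,n\}$. In particular, for every depth $d$ strictly below $d_c$ the strict inequality $C(A,n,d) > C(A,n,d+1)$ must hold, and because $C$ is integer-valued each such drop is by at least one unit.

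Second, I would telescope these strict unit drops and use non-negativity of $C$:
\[
d_c - 1 \;\leq\; \sum_{d=1}^{d_c - 1}\bigl(C(A,n,d) - C(A,n,d+1)\bigr) \;=\; C(A,n,1) - C(A,n,d_c) \;\leq\; C(A,n,1).
\]
With the convention of Lemma \ref{lem:plateau} (labelling $d_c$ as the last depth at which a strict decrease occurs, i.e., one step before the sequence flat-lines for good), this gives exactly the claimed bound $d_c \leq C(A,n,1)$.

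I do not anticipate any substantive technical obstacle here: essentially all of the structural content has already been supplied by Lemmas \ref{lem:step_down} and \ref{lem:plateau}, and what remains is elementary counting of strict unit drops in a non-negative integer sequence against its initial value $C(A,n,1)$. The only minor care needed is in tracking the indexing convention for $d_c$ so that the off-by-one is absorbed correctly, and in noting that the degenerate case $C(A,n,1)=0$ is consistent with $d_c=0$ (since no strict decrease ever occurs).
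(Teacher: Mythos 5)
Your proof is correct and is essentially the paper's own argument: Lemmas \ref{lem:step_down} and \ref{lem:plateau} force strict drops of the non-negative, integer-valued sequence $d\mapsto C(A,n,d)$ at every depth before $d_c$, and counting those unit drops against the initial value $C(A,n,1)$ gives the bound. Your telescoping sum merely makes explicit the paper's statement that the capacity ``may decrease at most $C(A,n,1)$ times before reaching zero and becoming constant,'' and the residual off-by-one you flag is purely a matter of which depth Lemma \ref{lem:plateau} labels as $d_c$ --- an ambiguity the paper's own proof silently absorbs in the same way.
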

 \begin{proof}
  By Lemmas \ref{lem:step_down} and \ref{lem:plateau}, $C(A,n,d)$ must decrease when $d$ is increased, up to the critical value $d_c$. If it ever fails to decrease, then we have reached $d_c$ and the transmission capacity must be constant thereafter. $C(A,n,d)$ is a non-negative quantity; therefore, it may decrease at most $C(A,n,1)$ times before reaching zero and becoming constant. 
 \end{proof}

\subsubsection{Codifying the update rule $T$}

Using the Lemmas \ref{lem:step_down}, \ref{lem:plateau}, and \ref{lem:max_depth}, we formulate an algorithm to compute $C(A,n,d)$. 
\begin{table}[H]
    \centering
    \bgroup
    \def\arraystretch{1.5}
    \begin{tabular}{p{1.5ex}p{0.9\linewidth}}
        \hline \multicolumn{2}{p{0.8\linewidth}}{\textbf{Algorithm 1:} Transmission Capacity} \\ \hline
        \multicolumn{2}{p{0.8\linewidth}}{\textbf{Data:} Stabilizer tensor $A$, circumference $n$, depth $d$} \\
        \textbf{1} & Compute $\Phi_1(A,n)$, which determines $\mathcal{Z}_L(A,n,1)$ and the update rule $T$ \\
        \textbf{2} & Initialize $\mathcal{T}_R(A,n,1) = \{ T(x) | x \in G\}$, where $G$ is any generating set for $\mathcal{Z}_L(A,n,1)$ \\
        \textbf{3} & For every $y \in \mathcal{T}_R(A,n,1)$, check if $y$ commutes with $\mathcal{Z}_L(A,n,1)$. If so, $T(y)$ exists. Add every valid $T(y)$ to $\mathcal{T}_R(A,n,1)$ and obtain $\mathcal{T}_R(A,n,2)$ \\
        \textbf{4} & Repeat Step 3 until the desired depth is reached, or until $\mathcal{T}_R(A,n,d) = \mathcal{T}_R(A,n,d-1)$ \\
        \textbf{5} & Return $C(A,n,d) = n - \rank [\mathcal{T}_R(A,n,d)]$ \\ \hline
    \end{tabular}
    \egroup
\end{table}

Algorithm 1 shows that the following mathematical objects are sufficient to compute $C(A,n,d)$:
\begin{enumerate}[(i)]
 \item The canonical generators of $S(A,n,1)$, i.e. $\Phi_1(A,n)$.
 \item The commutation relations between generators of $\mathcal{Z}_L(A,n,1)$ and $S_R(A,n,1)$, which we denote $\Omega_1(A, n)$. 
\end{enumerate}
Recall that $\Phi_1$ is characterized by the tuple $(n,p)$, however, there is considerable freedom to define $\Omega_1$ by choosing different generators for $\Phi_1$. Therefore, it will be useful to have a standard form for $\Omega_1$. To this end, we represent $\Omega_1$ by a bipartite graph $G_\Omega$ with node sets $\{a_i\}$ and $\{b_j, g_k^R, \bar{g}_k^R\}$. Let the edges of $G_\Omega$ indicate anticommutation between operators in different node sets, and let the biadjacency matrix be $M$, with the following layout:
\begin{equation}
\kbordermatrix{& b_1 & \cdots & b_{n-p} & g_1^R & \bar{g}_1^R & \cdots & g_p^R & \bar{g}_p^R \\ a_1 \\ \vdots \\ a_{n-p}} 
\end{equation}
The operations on $M$ that preserve $\Phi_1$ are:
\begin{enumerate}[(i)]
 \item Swap any two rows, or any of the first $(n-p)$ columns. 
 \item Add any row to any other row, modulo 2. 
 \item Add any of the first $(n-p)$ columns to any of the last $p$ columns, modulo 2.
 \item Reorder the $2\times(n-p)$ blocks of $M_{\Omega_1}$ that represent the anticommuting pairs.
 \item Within any of the blocks for each anticommuting pair:
 \begin{enumerate}[(a)]
 \item Swap columns.
 \item Add one column to the other or vice versa, modulo 2.
 \end{enumerate}
\end{enumerate}
Using these operations, $M$ may be brought to the following standard form:
\begin{equation}
  \begin{array}{cccccccccc}
  \left(\rule{0cm}{1.2cm}\right. \mqty{\lambda_1 & & \\ & \ddots & \\ & & \lambda_{n-p}} & \left|\rule{0cm}{1.2cm}\right. & H_1 & \left|\rule{0cm}{1.2cm}\right. & \cdots & \left|\rule{0cm}{1.2cm}\right. & H_p \left.\rule{0cm}{1.2cm}\right)
  \end{array}
\end{equation}
where $\lambda_i \in \{0,1\}$ and $H_j$ are $(n-p)\times 2$ block matrices in RCEF, ordered by rank. The RREF and RCEF of any matrix (or block matrix) is unique, therefore this standard form is well-defined and we can use it to compare two stabilizer PEPS.

\subsubsection{Local generators}

The next step is to eliminate the dependence of $\Phi_1$ and $\Omega_1$ on $n$. We do this by exploiting the translation invariance of stabilizer PEPS to find a local form for the generators of $S(A,n,1)$. To be precise, we show that the generating set for $S(A,n,1)$ has a form where every generator save one is $k$-local with $k \leq 3$. Because there is only one nonlocal generator, we slightly abuse terminology and refer to the local form of $S(A,n,1)$. 

To obtain a local form of $S(A,n,1)$, we recursively apply \eqref{eq:canonical stabs} to the symmetry group of the local tensor $A$, using smaller and smaller partitions each time. Let the 4 virtual legs of $A$ be partitioned into horizontal and vertical subsystems $H$ and $V$, which are subdivided into $\{u, d, l, r\} = \{\text{up}, \text{down}, \text{left}, \text{right}\}$:
\begin{equation} \label{eqf:tensor partitions}
    \includegraphics[valign=c, scale=1.0]{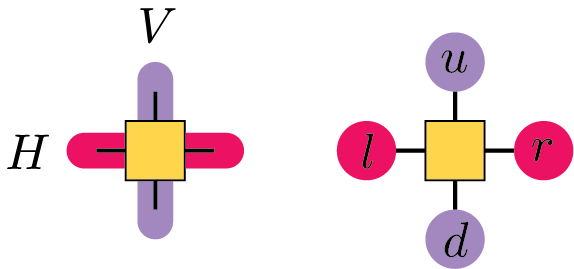}
\end{equation}
The first step in the recursive application of \eqref{eq:canonical stabs} is to bring $S(A,1,1)$ to canonical form with respect to the $\{H,V\}$ partition. For convenience, we denote $S(A,1,1)$ by $S_A$. The subsystems $H$ and $V$ have two legs each, so there may be $p=0$, 1, or 2 anticommuting pairs. The canonical form of $S_A$ for each case is:
\begin{equation} \label{eq:HV canonical form}
 \begin{aligned}[c]
 p &= 0 \\
 h_1 &\otimes I \\
 h_2 &\otimes I \\
 I &\otimes v_1 \\
 I &\otimes v_2
 \end{aligned}
 \qquad
 \begin{aligned}[c]
 p &= 1 \\
 h &\otimes I \\
 g^H &\otimes g^V \\
 \bar{g}^H &\otimes \bar{g}^V \\
 I &\otimes v
 \end{aligned}
 \qquad
 \begin{aligned}[c]
 p &= 2 \\
 g_1^H &\otimes g_1^V \\
 g_2^H &\otimes g_2^V \\
 \bar{g_1}^H &\otimes \bar{g_1}^V \\
 \bar{g_2}^H &\otimes \bar{g_2}^V.
 \end{aligned}
\end{equation}
Note that symmetries like $h_1\otimes I$ decouple the action of $S_A$ on the horizontal legs from its action on the vertical legs. $S(A,n,1)$ automatically inherits these symmetries for any $n$ because tiling them into a ring becomes trivial. In contrast, symmetries like $g^V\otimes g^H$ induce a coupling between the vertical and horizontal legs, in which case we cannot take the vertical tiling of $A$ for granted. 

\begin{lemma} \label{lem:HV p=0}
  If there are $p=0$ anticommuting pairs in the canonical form of $S_A$, then $S(A,n,1)$ has the following local generating set:
  \begin{equation}
  S(A,n,1) = \expval{h_{1,i}, \: h_{2,i}\:|\:i=1,\dots,n}.
  \end{equation}
 \end{lemma}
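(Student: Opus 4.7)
The plan is to use the $p=0$ assumption to factorize the local tensor, reassemble the ring, and then read off the stabilizer group of the resulting product state. Since the canonical form of $S_A$ in the $p=0$ column of \eqref{eq:HV canonical form} has no generators that couple $H$ to $V$, the restriction of $S_A$ to the horizontal subsystem already has full rank $|H|=2$ (and likewise for $V$). By the standard correspondence between stabilizer groups and stabilizer states, this forces the (physically-contracted) tensor $A$, viewed as a $4$-qubit state on the virtual legs, to be a product across the $\{H,V\}$ bipartition:
\begin{equation}
  A \;=\; A_H \otimes A_V,
\end{equation}
where $A_H$ is the $2$-qubit stabilizer state with group $\langle h_1, h_2\rangle$ and $A_V$ is the $2$-qubit stabilizer state with group $\langle v_1, v_2\rangle$.

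Next I would tile this factorization into the ring. Because all inter-site contractions take place on vertical legs, they act only on the $A_V$ factors and leave every $A_H$ factor untouched at a distinct boundary site. The $n$ copies of $A_V$ form a closed cyclic network with no open legs, which therefore evaluates to a scalar. Up to this scalar,
\begin{equation}
  \mathcal{A}(A,n,1) \;\propto\; \bigotimes_{i=1}^n A_{H,i},
\end{equation}
and the stabilizer group of a product state is generated by the local stabilizers of the factors, giving $\langle h_{1,i},\, h_{2,i} \mid i=1,\dots,n\rangle$ as claimed.

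To close the argument I would confirm by rank-counting: the $2n$ operators $\{h_{1,i}, h_{2,i}\}_{i=1}^n$ are linearly independent, since they are supported on pairwise disjoint pairs of qubits and $h_1,h_2$ are canonical independent generators on each site, while $S(A,n,1)$ has rank exactly $2n$ because measuring the $n$ physical qubits in the $X$ basis projects onto a pure stabilizer state on the $2n$ open horizontal legs. Matching ranks then forces the inclusion of generators to be an equality.

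The main obstacle I anticipate is bookkeeping around the scalar produced by the closed ring of $A_V$'s: one has to verify it is nonzero, so that $\mathcal{A}(A,n,1)$ really is a nonzero product state rather than a projection onto a lower-rank object. This is automatic once one recalls that measurement outcomes with nonzero probability leave behind a pure stabilizer state on the open legs, and that the sign choices in the canonical generators do not affect the set of generators one reads off for $S(A,n,1)$. After this detail is dispatched, the lemma is essentially a direct corollary of the factorization of stabilizer states across a bipartition with no anticommuting pairs.
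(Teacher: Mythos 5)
Your proof is correct, and it rests on the same two pillars as the paper's: the $p=0$ canonical form decouples the horizontal legs from the vertical legs, and a rank count ($2n$ independent commuting operators against the rank $2n$ of the pure stabilizer state on the $2n$ open horizontal legs) upgrades the inclusion of generators to equality. The packaging differs slightly. The paper stays at the operator level: symmetries of the form $h_j\otimes I$ act trivially on the vertical legs, so they tile trivially around the ring and are inherited verbatim by $S(A,n,1)$. You instead descend to the state level, invoking the Fattal-et-al.\ structure theorem to factorize $A = A_H\otimes A_V$ and collapsing the closed ring of $A_V$ factors to a scalar. Your route buys a strictly stronger conclusion --- the boundary state is literally $\bigotimes_i A_{H,i}$ --- at the price of the extra nonvanishing-scalar check, which you correctly identify and dispatch the same way the paper handles signs globally (a nonzero-probability contraction outcome always leaves a pure stabilizer state, and sign flips do not change the generating set modulo phases). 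The paper's operator-level phrasing quietly sidesteps that issue, since an inherited symmetry is a symmetry irrespective of what the vertical contraction evaluates to, though both arguments tacitly assume the assembled network is a nonzero state. No gap; your version is simply more explicit.
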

 \begin{proof}
  When $p=0$, the action of $S_A$ on the horizontal legs is decoupled from its action on the vertical legs, and $S(A,n,1)$ directly inherits the operators $h_{1,i}$ and $h_{2,i}$ for every site $i$. The set $\qty{h_{1,i}, \: h_{2,i}\:|\:i=1,\dots,n}$ is abelian with rank $2n$, therefore it is a generating set for $S(A,n,1)$. 
 \end{proof}

 When $p=1$, the situation is more complicated because the action of $S_A$ on the horizontal legs is not decoupled from its action on the vertical legs. Consequently, the elements of $S(A,n,1)$ depend on how the elements of $S_A$ match up on the vertical legs. To determine these vertical tilings, consider the restriction of $S_A$ to subsystem $V$, denoted $S_A^{(V)} = \expval{v, g^V, \bar{g}^V}.$
 
 By definition, $v$ and $g^V$ commute, thus the subgroup $R_A = \expval{v, g^V} \subset S_A^{(V)}$ is a 2-qubit stabilizer group and it may be brought to canonical form with respect to the $\{u,d\}$ partition. Depending on the number of anticommuting pairs $q$ in $R_A$, we have the following generating sets.
 \begin{equation} \label{eq:RA}
  \begin{aligned}
  &q=0: \quad R_A = \expval{\mu \otimes I, I \otimes \delta} \\
  &q=1: \quad R_A = \expval{\gamma^u \otimes \gamma^d, \bar{\gamma}^u \otimes \bar{\gamma}^d},
  \end{aligned}
 \end{equation}
 where Greek letters are used to differentiate operators which act on $\{u,d,l,r\}$ from those that act on $\{H,V\}$, and operators that differ only by an overbar must anticommute. 
 
 \begin{lemma} \label{lem:HV p=1 q=0}
  If there are $p=1$ anticommuting pairs in the canonical form of $S_A$ and $q=0$ anticommuting pairs in the canonical form of $R_A$, then $S(A,n,1)$ has the following local generating set:
  \begin{equation}
  S(A,n,1) = \expval{g^H_i, \: h_i \:|\:i=1,\dots,n}.
  \end{equation}
 \end{lemma}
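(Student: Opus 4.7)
The plan is to construct both families $\{h_i\}$ and $\{g^H_i\}$ as elements of $S(A,n,1)$ directly, verify their linear independence, and close by a rank count.

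The generator $h_i$ is immediate: applying the tensor symmetry $h\otimes I$ at site $i$ and the identity elsewhere makes every vertical contraction trivial and leaves the effective action $h$ on the open $H$-legs at site $i$.

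The substantive step is constructing $g^H_i$ from the $q=0$ decoupling of $R_A=\langle v,g^V\rangle$. I would first observe that $R_A$ has rank exactly $2$: any dependence of $v$ and $g^V$ either trivialises one of the four canonical generators of $S_A$ or forces $v$ to simultaneously commute and anticommute with $\bar g^V$. Hence $R_A=\langle\mu\otimes I,\,I\otimes\delta\rangle$ with $\mu,\delta$ nontrivial, and using the freedom to multiply $g^V$ and $\bar g^V$ by the centre element $v$ inside the anticommuting pair I may choose $v=\mu\otimes I$, $g^V=I\otimes\delta$, and $\bar g^V=I\otimes\bar\delta$, each with trivial $u$-part. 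I would then try the two-site configuration that applies $g^H\otimes g^V$ at site $i$ and $v$ at site $i-1$: its vertical contractions close iff $\mu=\delta$. Substituting $\bar g^H\otimes\bar g^V$ or $(g^H\bar g^H)\otimes(g^V\bar g^V)$ at site $i$ changes the condition to $\mu=\bar\delta$ or $\mu=\delta\bar\delta$ respectively. Since $\{\delta,\bar\delta,\delta\bar\delta\}$ exhausts the three nontrivial single-qubit Paulis on the $d$-leg, $\mu$ matches exactly one of them, and a relabelling within the anticommuting pair (swap $g\leftrightarrow\bar g$, or replace $g$ by $g\bar g$) lets me call the successful operator $g^H$, yielding $g^H_i$.

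To finish, $h$ and $g^H$ are linearly independent on the $H$ subsystem (otherwise the four canonical generators of $S_A$ would be rank-deficient), so $\{h_i,g^H_i\}_{i=1}^n$ is a set of $2n$ independent elements of $S(A,n,1)$. Because $\mathcal{A}(A,n,1)$ is a pure stabilizer state on $2n$ open virtual qubits, $S(A,n,1)$ has rank exactly $2n$, giving the desired equality. The main obstacle is precisely the construction of $g^H_i$: the $q=0$ hypothesis is what permits closing the vertical chain with only one extra tensor site, and pinning down which of $\delta,\bar\delta,\delta\bar\delta$ coincides with $\mu$ requires careful use of the canonical-form freedom. Without $q=0$, matching the $u$- and $d$-parts around the ring would generally demand an $n$-dependent configuration, and no $k$-local generating set of the claimed form would exist.
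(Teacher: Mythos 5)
Your proof is correct and follows essentially the same route as the paper: pass to coset representatives of $S_A^{(V)}$ modulo $\{e,v\}$, note that $\{\delta,\bar\delta,\delta\bar\delta\}$ exhausts the nontrivial single-qubit Paulis so one of them must equal $\mu$, relabel within the anticommuting pair accordingly, and read off $g^H_i$ from the two-site vertical tiling of $g^V$ with $v$; the $h_i$ and the rank count complete the argument just as in the paper. One small imprecision: the independence of $h$ and $g^H$ on the $H$ subsystem follows from the structure of the canonical form ($\langle h\rangle$ is the center of $S_A^{(H)}$ while $g^H,\bar g^H$ form an anticommuting pair completing it), not from rank-deficiency of the four generators of $S_A$ --- if $h=g^H$ those four generators would still be independent, since their product would contain the nontrivial element $I\otimes g^V$.
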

 \begin{proof}
  Without loss of generality let $v = \mu \otimes I$, so that $S_A^{(V)} = \expval{\mu\otimes I, I\otimes \delta, I\otimes \bar{\delta}}.$ Because $v$ commutes with $g^V$ and $\bar{g}^V$, we are free to multiply either of the latter by $v$ without altering the commutation relations. To account for this fact, we deal with the quotient group $S_A / \{e,v\}$. The cosets of $\{e, v\}$ are as follows.
  \begin{equation} \label{eq:HV p=1 q=0 partial cosets}
  \begin{aligned}
  &\qty[g^V]_v = \{I \otimes \delta, \mu \otimes \delta\} \\
  &\qty[\bar{g}^V]_v = \{I\otimes \bar{\delta}, \mu\otimes\bar{\delta}\} \\
  &\qty[g^V \bar{g}^V]_v = \{I\otimes \delta\bar{\delta}, \mu\otimes\delta\bar{\delta}\}.
  \end{aligned}
  \end{equation}
  Note that $\{\delta,\bar{\delta},\delta\bar{\delta}\} \in \{X,Y,Z\}$ are independent, giving one coset for each Pauli operator. Clearly, one of $\delta$, $\bar{\delta}$, or $\delta\bar{\delta}$ must equal $\mu$. Therefore, without loss of generality let us choose $g^V = I\otimes \mu$, and write every element of $S_A^{(V)}$ in terms of $\mu$ and its anticommuting partner $\bar{\mu}$:
  \begin{equation} \label{eq:HV p=0 q=0 cosets}
  \begin{aligned}
  &\qty[v]_v = \{\mu\otimes I, I\otimes I\} \\
  &\qty[g^V]_v = \{I \otimes \mu, \mu \otimes \mu\} \\
  &\qty[\bar{g}^V]_v = \{I\otimes \bar{\mu}, \mu\otimes\bar{\mu}\} \\
  &\qty[g^V \bar{g}^V]_v = \{I\otimes \mu\bar{\mu}, \mu\otimes\mu\bar{\mu}\}.
  \end{aligned}
  \end{equation} 
  
  Recall that $S(A,n,1)$ is constructed from $S_A$ by tiling elements of $S_A$ such that they match on contracted tensor legs. Using \eqref{eq:HV p=0 q=0 cosets} it is clear that the tiling $g^V_i v_{i+1}$ is valid, because $\mu$ is matched:
  \begin{equation} \label{eqf:HV p=0 q=0 tiling}
  \includegraphics[valign=c]{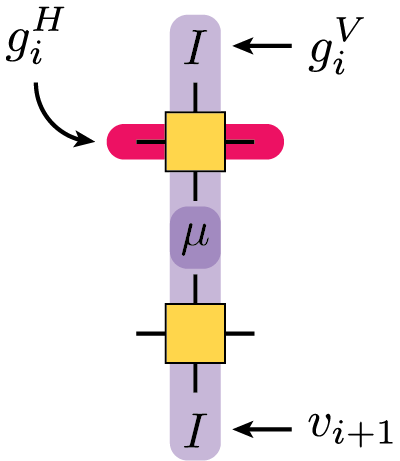}
  \end{equation}
  We have $g^V_i$ coupled to $g^H_i$ and $v_i$ not coupled to anything on the horizontal partition, therefore $g^H_i \in S(A,n,1)$ for all $i$, as shown in Eq. \ref{eqf:HV p=0 q=0 tiling}. This gives $n$ local generators of $S(A,n,1)$. The remaining $n$ local generators are given by $h_i$, which are trivially elements of $S(A,n,1)$ because they are decoupled from $V$. 
\end{proof}
\begin{restatable}{lemma}{mylemma} \label{lem:HV p=1 q=1}
  If there are $p=1$ anticommuting pairs in the canonical form of $S_A$ and $q=1$ anticommuting pairs in the canonical form of $R_A$, then the local generating set of $S(A,n,1)$ has one of two possible forms, depending on whether $\gamma^u = \gamma^d$ or $\gamma^u\neq\gamma^d$.
  \begin{enumerate}[(i)]
  \item If $\gamma^u=\gamma^d$, then $S(A,n,1) = \expval{h_i, \: g^H_ig^H_{i+1}, \: \prod_i \bar{g}^H_i\:|\:i=1,\dots,n}$ 
  \item If $\gamma^u\neq\gamma^d$, then $S(A,n,1) = \expval{h_i, \: \bar{g}^H_{i-1}g^H_i\bar{g}^H_{i+1}\:|\:i=1,\dots,n}$
  \end{enumerate}
\end{restatable}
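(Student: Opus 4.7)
The approach parallels that of Lemma~\ref{lem:HV p=1 q=0}: parameterize each element of $S(A,n,1)$ by a tiling of $S_A$-generators around the ring, impose the vertical matching conditions, and read off the horizontal generating set. Writing each $s_i = h^{a_i}(g^H\otimes g^V)^{b_i}(\bar{g}^H\otimes\bar{g}^V)^{c_i}v^{d_i}$ with $(a_i,b_i,c_i,d_i)\in\mathbb{F}_2^4$, the horizontal contribution of the tiling is $\prod_i h_i^{a_i}(g^H_i)^{b_i}(\bar{g}^H_i)^{c_i}$, and validity of the tiling requires $s_i|_u = s_{i+1}|_d$ for every $i$. The $a_i$ are unconstrained by the matching condition and immediately yield the $h_i$ family.

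Using the $q=1$ canonical form $R_A = \langle\gamma^u\otimes\gamma^d,\bar{\gamma}^u\otimes\bar{\gamma}^d\rangle$, I would identify $v$ and $g^V$ with these generators, with $\gamma$ and $\bar{\gamma}$ locally anticommuting on each of $u$ and $d$. The operator $\bar{g}^V$ sits outside $R_A$ but is determined modulo $R_A$ by the requirements that it commute with $v$ and anticommute with $g^V$; I would use this freedom to pick a representative adapted to each case (for instance one that acts trivially on one vertical leg). With these choices $s_i|_u$ and $s_i|_d$ become explicit words in $\{I,\gamma,\bar{\gamma},\gamma\bar{\gamma}\}$, and the matching condition collapses to a pair of $\mathbb{F}_2$-linear recursions in $(b_i,c_i,d_i)$ around the ring.

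In case (i), $\gamma^u=\gamma^d$, $v$ is invariant under the $u\leftrightarrow d$ swap, and the two recursions decouple: one forces an $\mathbb{F}_2$-linear combination of $(b_i,c_i)$ to be constant around the ring, the other is the local rule $c_{i+1} = d_i + d_{i+1}$. The ring-constancy condition closes non-trivially and produces the single global horizontal generator $\prod_i\bar{g}^H_i$; the local rule, excited by flipping $d_j$ at one site, yields the 2-local family $g^H_i g^H_{i+1}$. Together with the $h_i$, these give $2n+1$ generators with the single evident relation $\prod_i(g^H_i g^H_{i+1}) = I$, matching the expected rank $2n$ of $S(A,n,1)$.

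In case (ii), $\gamma^u\neq\gamma^d$, $v$ no longer self-matches and the two recursions are forced to couple into a single three-term relation. Solving it shows that $b_i=1$ at any site must be accompanied by $c_{i-1}=c_{i+1}=1$, which on the horizontal legs is precisely the 3-local pattern $\bar{g}^H_{i-1} g^H_i \bar{g}^H_{i+1}$; the analogue of the case-(i) ring-constancy condition now reduces to a trivial relation, so no non-local generator survives and only the $h_i$ together with the 3-local family remain. The main technical obstacle will be the case-(ii) algebra: reducing the 3-term recursion cleanly to the claimed 3-local pattern, verifying that the cycle closure around the ring collapses rather than producing an additional non-local generator, and checking that the final generating set is independent of the residual freedom in the choice of representative for $\bar{g}^V$.
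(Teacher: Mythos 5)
Your proposal is correct and follows essentially the same route as the paper's Appendix~\ref{apx:p1q1}: enumerate the vertical tilings of the local tensor symmetries (working modulo the freedom to multiply by $v$, i.e.\ modulo your residual choice of $\bar{g}^V$) and read off the induced horizontal generators, with the $h_i$ coming for free. The paper carries out the matching by explicit coset inspection with concrete Pauli representatives rather than your $\mathbb{F}_2$-linear recursions, but the two are the same computation, and your recursions do close as you anticipate (the ring-constancy branch yields the single global generator in case (i), while in case (ii) the product of all $n$ three-local generators reproduces $\prod_i g^H_i$, so no extra nonlocal generator appears).
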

\begin{proof}
  The proof is similar to Lemma \ref{lem:HV p=1 q=0}, see Appendix \ref{apx:p1q1}. 
\end{proof}
\begin{lemma} \label{lem:HV p=2}
  If there are $p=2$ anticommuting pairs in the canonical form of $S_A$, then $S(A,n,1)$ has the following local generating set: 
  \begin{equation}
  S(A,n,1)= \expval{g^H_{1,i}g^H_{2,i+1}, \: \bar{g}^H_{1,i}\bar{g}^H_{2,i+1} \:|\: i=1,\dots,n}.
  \end{equation}
 \end{lemma}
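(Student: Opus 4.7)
The plan is to use that, when $p=2$, every generator of $S_A$ couples $H$ and $V$ non-trivially ($\mathcal{Z}_H$ and $\mathcal{Z}_V$ are empty), so the whole local tensor is captured by a single symplectic isomorphism between the horizontal and vertical $2$-qubit Pauli groups. First I would verify that the restriction maps $\pi_H,\pi_V : S_A \to V_H, V_V$ (where $V_H, V_V$ denote the $4$-dimensional symplectic spaces of $2$-qubit Paulis modulo phase on the horizontal and vertical legs) are bijections: the symplectic-basis property of $\{g_k^V, \bar g_k^V\}_{k=1,2}$ makes $\pi_V$ surjective, and $|S_A| = 16 = |V_V|$ then forces bijectivity. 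The composition $\phi := \pi_H \circ \pi_V^{-1}$ is an $\mathbb{F}_2$-linear isomorphism, and abelian-ness of $S_A$ forces $\omega_H(\phi(x),\phi(y)) = \omega_V(x,y)$, so $\phi$ is symplectic.

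Next I would exploit the freedom in choosing the canonical form. Because $\phi$ is symplectic, any symplectic basis of $V_V$ pulls back under $\phi^{-1}$ to a symplectic basis of $V_H$, together producing a valid canonical form of $S_A$. I would pick a basis of $V_V$ respecting the up/down product structure: $g_1^V = I\otimes X$, $\bar g_1^V = I\otimes Z$, $g_2^V = X\otimes I$, $\bar g_2^V = Z\otimes I$. With this choice, placing $g_1^H\otimes g_1^V$ on site $i$ and $g_2^H\otimes g_2^V$ on site $i+1$ (identity elsewhere) matches the $X$ on the down leg of site $i$ with the $X$ on the up leg of site $i+1$ and keeps every other vertical leg trivial, yielding $g^H_{1,i}g^H_{2,i+1} \in S(A,n,1)$; the barred version is analogous.

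Finally I would count. Every symmetry of the ring corresponds to a cyclic vertical-Pauli sequence $(u_1,\dots,u_n)$, with the $H$-action on tensor $i$ given by $\phi(u_i,u_{i+1})$. Injectivity of $\phi$ implies distinct sequences produce distinct ring symmetries, so $|S(A,n,1)| = 4^n$ and the rank of $S(A,n,1)$ is $2n$. The stated $2n$ operators correspond to the sequences with a single non-trivial entry in $\{X,Z\}$, which form an $\mathbb{F}_2$-basis of all cyclic sequences; independence is checked directly tensor by tensor, since each tensor's $H$-contribution pins down a single pair of coefficients against the symplectic basis $\{g_1^H,\bar g_1^H, g_2^H, \bar g_2^H\}$ of $V_H$.

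The main obstacle I anticipate is articulating the freedom in the canonical form cleanly: since the pair structure simultaneously constrains $H$ and $V$, it is not immediate that one may freely re-choose a symplectic basis on $V$ and have a compatible canonical form emerge on $H$. The symplectic-isomorphism viewpoint is what makes this work, and it is also what reduces the case analysis present in Lemma \ref{lem:HV p=1 q=1} (where $\gamma^u = \gamma^d$ versus $\gamma^u\neq\gamma^d$) down to a single universal form here: with $p=2$ one has enough degrees of freedom in the symplectic basis of $V_V$ to always diagonalize the up/down split.
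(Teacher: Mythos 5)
Your proof is correct and follows essentially the same route as the paper's: both exploit that the vertical restriction of $S_A$ is all of $\mathcal{P}_2$ in order to re-choose the canonical generators in the up/down-factorized form $I\otimes X,\, X\otimes I,\, I\otimes Z,\, Z\otimes I$, and then tile to obtain the $2n$ two-local generators. Your symplectic-isomorphism framing and the explicit $4^n$ count merely make rigorous two steps the paper states without elaboration, namely the ``without loss of generality'' re-choice of the vertical basis and the fact that the $2n$ independent commuting elements exhaust $S(A,n,1)$.
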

 \begin{proof}
  When $p=2$, the restriction of $S_A$ to $V$ is $S_A^{(V)} = \{g_1^V, g_2^V, \bar{g}_1^V, \bar{g}_2^V\}$, which has rank 4. Therefore, we must have $S_A^{(V)} = \mathcal{P}_2$. Without loss of generality, we may pick any generating set for $\mathcal{P}_2$ that matches the commutation structure of $S_A^{(V)}$. Therefore, let 
  \begin{equation}
  g^V_1 = I\otimes X, \quad g^V_2 = X\otimes I, \quad \bar{g}^V_1 = I\otimes Z, \quad \bar{g}^V_2 = Z\otimes I.
  \end{equation}
  Clearly the vertical tilings are $g^V_{1,i} g^V_{2,i+1}$ and $\bar{g}^V_{1,i} \bar{g}^V_{2,i+1}$. On the horizontal partition, these correspond to $g^H_{1,i} g^H_{2,i+1}$ and $\bar{g}^H_{1,i} \bar{g}^H_{2,i+1}$, respectively, which form a set of $2n$ independent, commuting elements. Therefore, they form a generating set for $S(A,n,1)$.
 \end{proof}
\begin{lemma} \label{lem:HV summary}
    The generating set for $S(A,n,1)$ may always be written in one of four local forms:
    \begin{enumerate}[(i)]
        \item $\qty{h_{1,i}, \: h_{2,i}\:|\:i=1,\dots,n}$
        \item $\qty{h_i, \: g^H_ig^H_{i+1}, \: \prod_i \bar{g}^H_i\:|\:i=1,\dots,n}$
        \item $\qty{h_i, \: \bar{g}^H_{i-1}g^H_i\bar{g}^H_{i+1}\:|\:i=1,\dots,n}$
        \item $\qty{g^H_{1,i}g^H_{2,i+1}, \: \bar{g}^H_{1,i}\bar{g}^H_{2,i+1} \:|\: i=1,\dots,n}$
    \end{enumerate}
\end{lemma}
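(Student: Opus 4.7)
The plan is to prove the statement by a direct case analysis on the canonical structure of $S_A$, invoking each of Lemmas \ref{lem:HV p=0}--\ref{lem:HV p=2} in turn. Since the horizontal and vertical subsystems $H$ and $V$ each consist of two virtual legs, the number of anticommuting pairs $p$ in the canonical form of $S_A$ with respect to the $\{H,V\}$ partition satisfies $p \in \{0,1,2\}$. For $p=1$ I would further split on the number of anticommuting pairs $q \in \{0,1\}$ in the canonical form of $R_A = \langle v, g^V\rangle$ with respect to the $\{u,d\}$ partition, and, when $q=1$, separate the subcases $\gamma^u = \gamma^d$ and $\gamma^u \neq \gamma^d$. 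These five branches exhaust every possibility for the canonical structure of $S_A$.

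Next I would read off the local generating set from the matching lemma in each branch. Lemma \ref{lem:HV p=0} gives form (i) directly. In the $p=1,\,q=0$ branch, Lemma \ref{lem:HV p=1 q=0} produces the generating set $\{g^H_i, h_i \,|\, i=1,\dots,n\}$; since both $g^H_i$ and $h_i$ are supported on the single horizontal site $i$, the relabeling $h_{1,i} := g^H_i$, $h_{2,i} := h_i$ recasts this set in form (i) as well. The two subcases of Lemma \ref{lem:HV p=1 q=1} yield forms (ii) and (iii) verbatim, and Lemma \ref{lem:HV p=2} yields form (iv). Collecting these, every stabilizer tensor $A$ admits a local generating set of one of the four listed shapes.

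The main obstacle here is essentially organisational rather than technical: all of the nontrivial vertical tiling arguments have been discharged in the preceding four lemmas, so the only care required is in confirming that the $(p=1,\,q=0)$ branch really collapses into form (i) rather than requiring a separate entry in the list. That is what keeps the classification at four local forms instead of five, and it is the one bookkeeping subtlety I would explicitly call out when writing the argument up.
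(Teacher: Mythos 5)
Your proposal is correct and follows essentially the same route as the paper: the lemma is proved by exhausting the cases $p=0$, $p=1$ (with $q=0$ or $q=1$, the latter split on $\gamma^u=\gamma^d$ versus $\gamma^u\neq\gamma^d$), and $p=2$ via Lemmas \ref{lem:HV p=0} through \ref{lem:HV p=2}, and by observing that the $p=0$ and $p=1,\,q=0$ generating sets are isomorphic and so both fall under form (i). The bookkeeping point you flag is precisely the one remark the paper's proof makes explicit.
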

\begin{proof}
    See Lemmas \ref{lem:HV p=0} through \ref{lem:HV p=2}. The local generating sets from Lemmas \ref{lem:HV p=0} and \ref{lem:HV p=1 q=0} are isomorphic, so they both fall under (i) in the above.
\end{proof}

\begin{figure}
    \centering
    \begin{tabular}{cc}
        \includegraphics[scale=0.6]{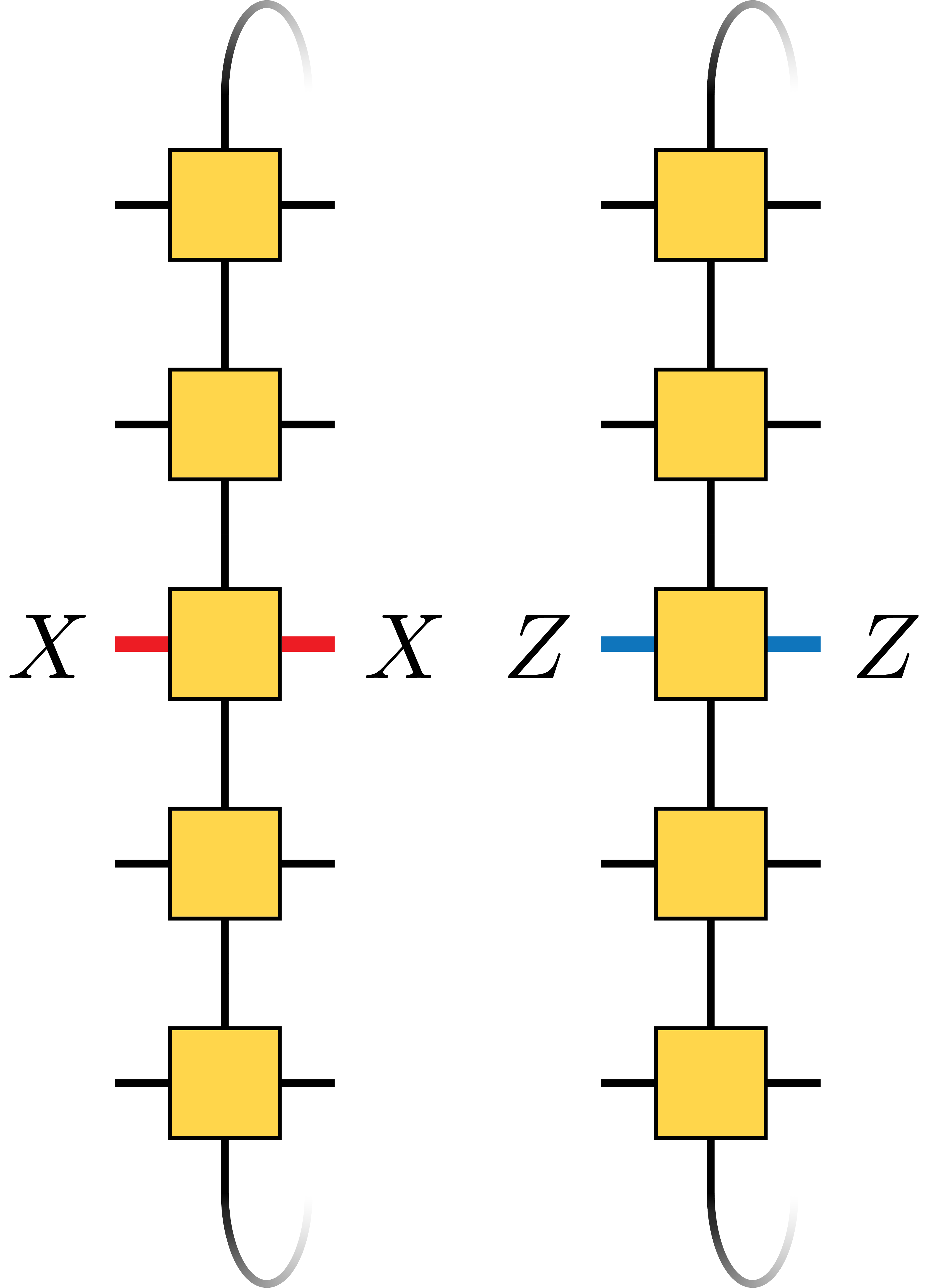} \hspace{5mm} & \includegraphics[scale=0.6]{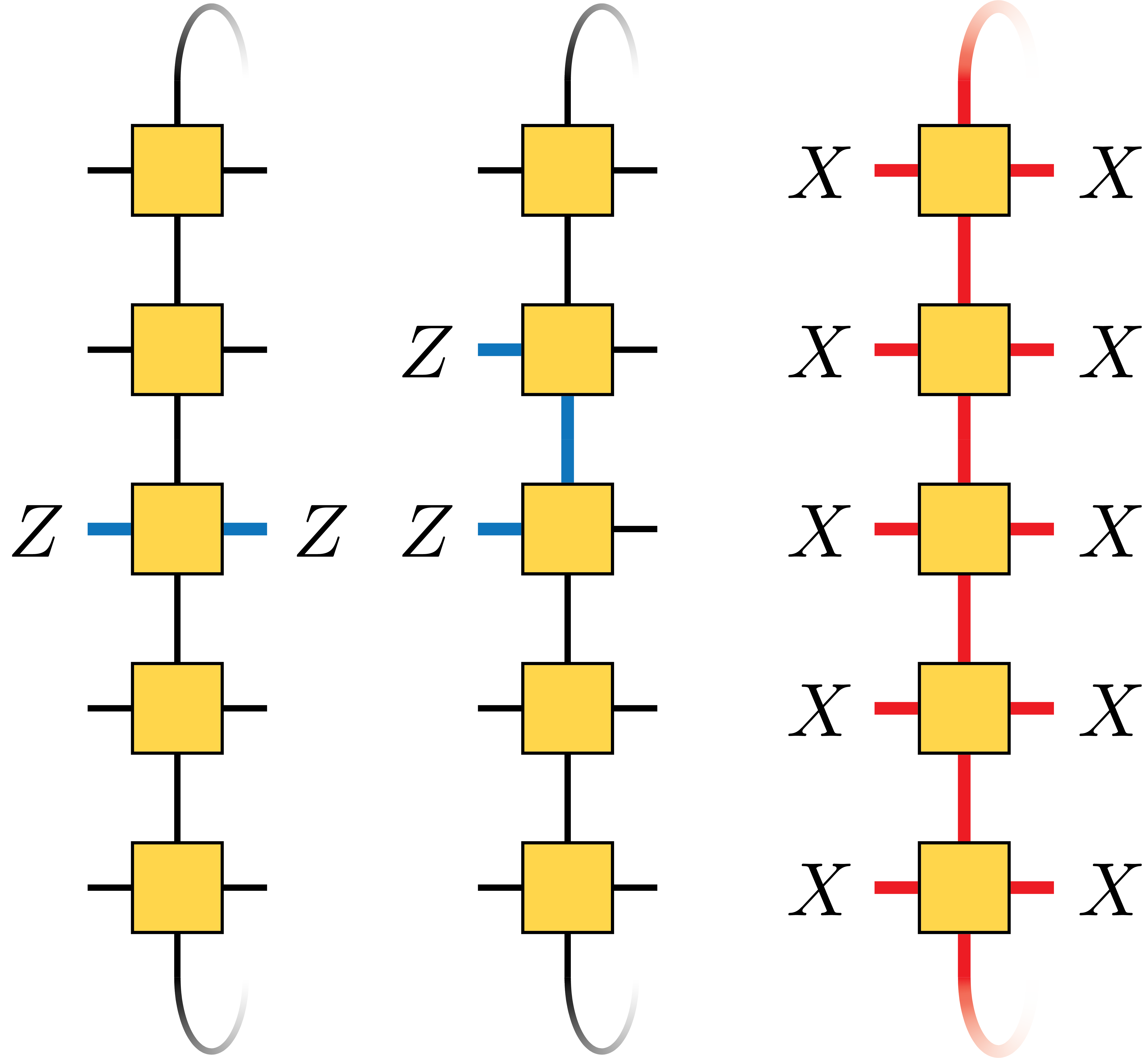} \hspace{5mm} \\
        \rule[-5mm]{0pt}{9mm} (i) & (ii) \\
        \includegraphics[scale=0.6]{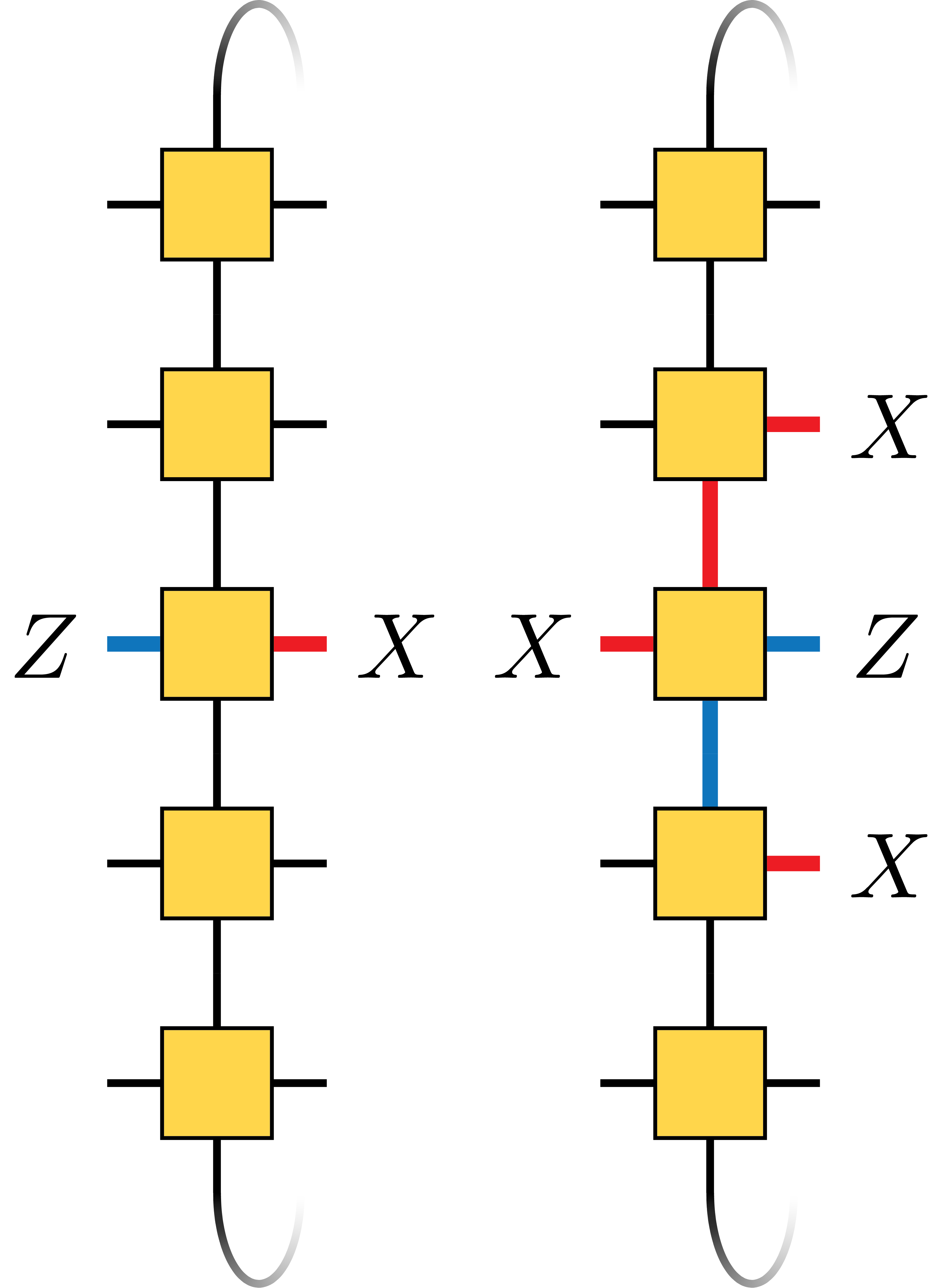} & \includegraphics[scale=0.6]{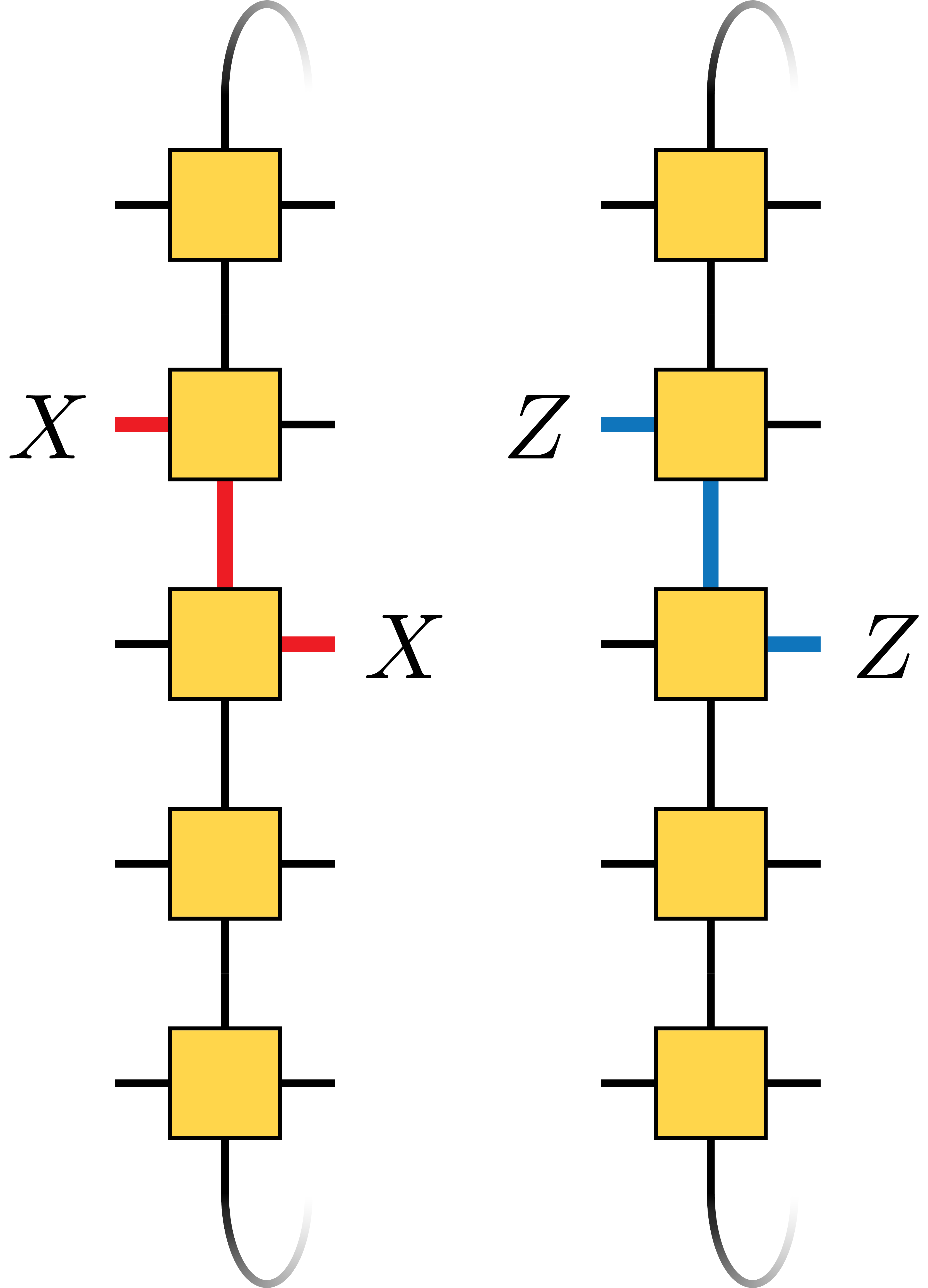} \\
        \rule[-5mm]{0pt}{9mm} (iii) & (iv)
    \end{tabular}
    \caption{Examples of each class of local generators from Lemma \ref{lem:HV summary}.}
    \label{fig:islands}
\end{figure}

\subsubsection{Bipartite local generators}

So far we have presented two forms of $S(A,n,1)$, each with advantages and disadvantages. The canonical bipartite form directly reveals the channel capacity of $\mathcal{A}(A,n,1)$ and determines the channel capacity for any depth via $\Phi_1$ and $\Omega_1$. However, in general, it may depend on $n$. The local form removes this dependence on $n$ but does not tell us anything about channel capacity. In this section, we connect the local form and the canonical form.

The first step is to partition the local generators into the same left and right subsets used for the canonical form. When viewed on one partition at a time, the local generators are no longer guaranteed to be abelian; the degree to which they fail to be abelian dictates $\Phi_1$. Without loss of generality, consider the left partition. A computer search reveals that up to gauge equivalence there are nine local generating sets of $S_L$ (Table \ref{tab:islands}), which we refer to as the standard local generators. 

Recall that equivalence classes of $\Phi_1$ are labeled by the tuple $(n,p)$, where $p$ is the number of anticommuting pairs. It follows from \eqref{eq:canonical stabs} that $p = \rank(S_L) - n.$ The rank of $S_L$ is independent of any particular generating set, so we may use the local generators to determine $p$ as a function of $n$. When this is done, we find a one-to-one correspondence between the non-trivial equivalence classes of $\Phi_1$ -- that is, classes where $p\neq 0$ -- and the standard local generating sets (Table \ref{tab:islands}). Furthermore, by comparing $p(n)$ with $C(A,n,1)$ in Fig. \ref{fig:heatmaps} we can see that each transmission capacity class corresponds to exactly one equivalence class of $\Phi_1$ (Table \ref{tab:phi to tp}).

\begin{table}
 \centering
 \footnotesize
 \setlength\tabcolsep{1.5pt} 
 \begin{tabular}{|c|c|l|l|}
 \hline
 \thead{$\Phi_1$ \\ Class} & \thead{Standard \\ Local \\ Generators} & \thead{Rank of $S_L$} & \thead{$p$} \\ \hline
 \textbf{a} & $\langle Z_i \rangle$ & $n$ & 0 \\ \hline
 \textbf{a} & $\langle X_{i-1}Z_iX_{i+1} \rangle$ & $n$ & 0 \\ \hline
 \textbf{a} & $\langle Z_i Z_{i+1}, \prod_i X_i \rangle$ & $n$ & 0 \\ \hline
 \textbf{b} & $\langle Z_i, \prod_i X_i \rangle $ & $n+1$ & 1 \\ \hline
 \textbf{c} & $\langle Z_i, X_iX_{i+1}\rangle$ & $2n-1$ & $n-1$ \\ \hline
 \textbf{d} & $\langle Z_i, X_{i-1}I_i X_{i+1} \rangle$ & $\begin{cases} 2n-2, \text{ $n$ even} \\ 2n-1, \text{ $n$ odd} \end{cases}$ & $\begin{cases} n-2 \\ n-1 \end{cases}$\\ \hline
 \textbf{e} & $\langle Z_i, X_{i-1}X_iX_{i+1} \rangle$ & $\begin{cases} 2n-2, \text{ $n$ div. by 3} \\ 2n, \text{ otherwise} \end{cases}$ & $\begin{cases} n-2 \\ n \end{cases}$\\ \hline
 \textbf{f} & $\langle Z_i, X_iX_{i+1}, \prod_i X_i\rangle$ & $\begin{cases} 2n-1 \quad n \text{ even} \\ 2n \quad n \text{ odd} \end{cases}$ & $\begin{cases} n-1 \\ n \end{cases}$\\ \hline
 \textbf{g} & $\langle Z_i, X_i\rangle$ & $2n$ & $n$ \\ \hline
 \end{tabular}
 \caption{Standard local generating sets of $S_L$.}
 \label{tab:islands}
\end{table}

\begin{table}
\centering
  \begin{tabular}{|c|l|}
    \hline
    \thead{$\Phi_1$ \\ Class} & \thead{Transmission \\ Classes} \\
    \hline
    \textbf{a} & 0 \\ \hline
    \textbf{b} & 1, 2 \\ \hline 
    \textbf{c} & 4, 6, 7 \\ \hline
    \textbf{d} & 3, 5 \\ \hline
    \textbf{e} & 8, 10 \\ \hline
    \textbf{f} & 9, 11 \\ \hline
    \textbf{g} & 12 \\ \hline
  \end{tabular}
  \caption{Correspondence between transmission classes and equivalence classes of $\Phi_1$.}
  \label{tab:phi to tp}
\end{table}

\begin{definition}
 Let the vector $\va{\Phi}(A)$ be 
 \begin{equation}
 \va{\Phi}(A) = \Phi_1(A, n) \quad n = 1, 2, \dots, \infty
 \end{equation}
 and define $\va{\Phi}(A_1)\sim\va{\Phi}(A_2)$ if $\Phi_1(A_1, n) \sim \Phi_1(A_2, n)$ for all $n$.
\end{definition}
\begin{lemma} \label{lem:phi vector}
 $\va{\Phi}(A_1)\sim\va{\Phi}(A_2)$ if and only if $p(A_1,n) = p(A_2,n)$ for values of $n$ such that at least one is even, one is odd, one is divisible by 3, and one is not divisible by 3.
\end{lemma}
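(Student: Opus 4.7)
The plan is to exploit Table~\ref{tab:islands}, which supplies an explicit formula for $p(A,n)$ in each of the seven $\Phi_1$-equivalence classes $\mathbf{a}$--$\mathbf{g}$ arising from the standard local generators. The forward direction is immediate from the definitions: $\va{\Phi}(A_1)\sim\va{\Phi}(A_2)$ means $\Phi_1(A_1,n)\sim\Phi_1(A_2,n)$ for every $n$, and since equivalence classes of $\Phi_1$ are labelled by $(n,p)$, this forces $p(A_1,n)=p(A_2,n)$ at every $n$, in particular on the specified test set.

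For the backward direction, I would first read off from Table~\ref{tab:islands} that within each class $p(A,n)$ depends on $n$ only through the parity $n \bmod 2$ and the indicator $[\,3\mid n\,]$: classes $\mathbf{a}$, $\mathbf{b}$, $\mathbf{c}$, $\mathbf{g}$ give formulas independent of both flags; classes $\mathbf{d}$ and $\mathbf{f}$ have a parity dependence only; and class $\mathbf{e}$ has a divisibility-by-$3$ dependence only. Consequently, any test set containing at least one even $n$, one odd $n$, one $n$ divisible by $3$, and one $n$ not divisible by $3$, samples every branch of every piecewise formula that appears. It follows that the tuple of $p$-values read off on such a test set determines the class of $A$ unambiguously; once the class is fixed, $p(A,n)$ is determined for all $n$ by the corresponding row of the table, and hence so is $\Phi_1(A,n) \sim (n,p(A,n))$.

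The only remaining work is a finite case check that the seven classes yield pairwise distinct tuples on the chosen test set. I would carry this out at a concrete witness set, e.g.\ $\{n=2,3\}$ (which already hits all four parity/divisibility flavors), and simply list $p(A,n)$ for each class at those $n$, verifying that all seven tuples are different. This is the main, though modest, obstacle: one has to confirm that no two classes accidentally collide on the chosen witness values. The separation is generous, because the constant-$p$ classes ($\mathbf{a}, \mathbf{b}$) are cleanly distinguished from the linear-$p$ classes ($\mathbf{c}$--$\mathbf{g}$) once $n\geq 2$, and within the linear classes the parity and $3$-divisibility modifiers shift $p$ by integers that cannot be absorbed into one another.

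Putting it together: agreement of $p(A_1,n)$ and $p(A_2,n)$ on the test set forces $A_1$ and $A_2$ into the same $\Phi_1$-equivalence class, whereupon Table~\ref{tab:islands} gives $p(A_1,n)=p(A_2,n)$ for all $n$, and therefore $\Phi_1(A_1,n)\sim\Phi_1(A_2,n)$ for all $n$, i.e.\ $\va{\Phi}(A_1)\sim\va{\Phi}(A_2)$.
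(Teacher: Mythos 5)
Your proposal is correct and follows essentially the same route as the paper: the forward direction is definitional, and the backward direction rests on reading off from Table~\ref{tab:islands} that each class's formula for $p(n)$ switches branches only with the parity of $n$ and its divisibility by $3$, so a test set covering all four flags samples every branch. Your explicit insistence on the finite check that no two classes accidentally collide on the sampled values (e.g.\ classes $\mathbf{a}$ and $\mathbf{d}$ both give $p=0$ at $n=2$, and $\mathbf{b}$ and $\mathbf{c}$ both give $p=1$ there) is a point the paper's proof glosses over and is genuinely needed; just note that verifying it only at the witness set $\{2,3\}$ proves the lemma for that particular test set, whereas the statement as written quantifies over \emph{any} qualifying set of values, so the pairwise collision analysis should be carried out for an arbitrary covering set (it does go through, since any two distinct affine formulas from $\{0,1,n,n-1,n-2\}$ agree at most at one $n$ and the covering conditions force a second, separating sample).
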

\begin{proof}
 If $\va{\Phi}(A_1) \sim \va{\Phi}(A_2)$ then by definition $p(A_1,n)=p(A_2,n)$ for all $n$. To prove the converse, notice that for every standard local generatoring set in Table \ref{tab:islands}, the expression for $p$ as a function of $n$ only changes if $n$ switches from even to odd, or from being divisible by 3 to not divisible by 3. Furthermore, these dependencies are mutually exclusive, i.e., if $p$ depends on whether $n$ is even or odd then it does not depend on the divisibility of $n$ by 3, and vice versa. Therefore, if $p(A_1,n) = p(A_2,n)$ for values of $n$ that satisfy the conditions of the lemma, then $p(A_1,n)=p(A_2,n)$ for all $n$, which implies $\va{\Phi}(A_1) \sim \va{\Phi}(A_2)$.
\end{proof}

\subsubsection{Periodic boundary conditions}

The local generators of $S_L$ and $S_R$ act on a ring of $n$ sites, thus they are subject to the effects of periodic boundary conditions. These effects have already played a role in the characterization of $\va{\Phi}(A)$, e.g. in Table \ref{tab:islands} and Lemma \ref{lem:phi vector}. The next object we wish to characterize is $\Omega_1(A,n)$, but to do so we require some discussion of local commutation relations in the presence of PBC. 

Consider two local operators $B$ and $C$ acting on a ring of $n$ sites. The intersection between $B$ and $C$ is the overlap between the support of $B$ and the support of $C$ for every translation of $C$. Following \cite{Schumacher2004}, we define the intersection between $B$ and $C$ to be regular for a given value of $n$ if its geometry does not change as $n\to\infty$.

For example, let $B_i = X_i X_{i+1} X_{i+2}$ and $C_i = Z_i Z_{i+1} Z_{i+2}$. When $n$ is large $B_i$ and $C_j$ slide past each other with no boundary effects. However, when $n=4$ both ends of $C_3$ simultaneously overlap with $B_1$. This type of intersection is not possible when $n\to\infty$, so the intersection of $B$ and $C$ is irregular for $N=4$. 

Regularity of intersection has a direct effect on commutation relations, as illustrated by the fact that $B_1$ and $C_3$ commute for $N=4$ and anticommute for $n\geq 5$. To formalize this phenomenon, we have the following lemma.

\begin{lemma} \label{lem:regularity}
 Consider a $k$-local operator $B_i$ and an $\ell$-local operator $C_j$ acting on a ring of $n$ sites. The intersection between $B_i$ and $C_j$ is irregular for $n < k + \ell - 1$ and regular otherwise.
\end{lemma}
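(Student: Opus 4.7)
The plan is to reduce everything to a clean counting statement about which arcs on the ring can produce a disconnected overlap with a fixed arc. Without loss of generality I would fix $B_i$ to occupy the sites $\{1,2,\dots,k\}$ on the ring $\mathbb{Z}/n\mathbb{Z}$ and let $C_j$ range over all length-$\ell$ consecutive arcs as $j$ varies. On the infinite line $\mathbb{Z}$ (i.e., the $n\to\infty$ limit), any two consecutive arcs intersect in a single consecutive arc (possibly empty), whose length ranges from $0$ to $\min(k,\ell)$. Hence the intersection on the ring is \emph{regular} precisely when every ring-overlap $B_i\cap C_j$ is still a single consecutive arc, and \emph{irregular} precisely when some $C_j$ can produce a disconnected overlap with $B_i$.

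Next I would characterize the disconnected case. Since $B_i$ is itself a single arc, the overlap $B_i\cap C_j$ fails to be contiguous if and only if $C_j$ contains both endpoints $1$ and $k$ of $B_i$ while omitting some intermediate site of $B_i$. Because $C_j$ is itself a consecutive arc, containing both $1$ and $k$ forces $C_j$ to take one of only two ``routes'' around the cycle: the short way through $\{2,\dots,k-1\}$, which swallows $B_i$ entirely and gives a connected overlap, or the long way through the complement $\{k,k+1,\dots,n,1\}$, which has exactly $n-k+2$ sites. Thus a disconnected overlap is realizable if and only if $\ell\ge n-k+2$, which is exactly $n\le k+\ell-2$, i.e.\ $n<k+\ell-1$.

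Finally, I would verify the converse cleanly: when $n\ge k+\ell-1$, no length-$\ell$ arc can simultaneously cover both endpoints of $B_i$ via the complement, so every $B_i\cap C_j$ is a single (possibly empty) consecutive block, matching the infinite-line geometry. Concretely, one can write down, for each relative offset, the unique contiguous intersection $B_i\cap C_j$ and check that its length matches the corresponding offset on $\mathbb{Z}$.

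The main obstacle is being careful with degenerate cases: if $k=1$ or $\ell=1$ the ``endpoint'' characterization collapses (a one-site operator has no pair of endpoints to separate), and if $k+\ell>n$ one may worry about $C_j$ being forced to wrap even without encircling. In each such case the bound $n\ge k+\ell-1$ either holds vacuously or forces the arcs to overlap in a single trivially connected piece, so the conclusion survives; I would handle these as short remarks after the generic argument rather than as separate cases.
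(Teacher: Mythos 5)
Your argument is correct and is essentially the paper's own proof: both identify the threshold configuration in which $C_j$ wraps around the ring to touch both endpoints of $B_i$, yielding a disconnected overlap precisely when $\ell \ge n-k+2$, i.e.\ $n \le k+\ell-2$. You simply make explicit the counting that the paper dispatches ``by inspection,'' and your attention to the degenerate cases $k=1$, $\ell=1$, and $k+\ell>n$ is a welcome refinement rather than a departure.
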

\begin{proof}
 Let $B_i$ have support on sites $i$ through $i+k-1$ and $C_j$ have support on sites $j$ through $j+\ell-1$. The transition between irregular and regular intersection is given by the case where $B_i$ and $C_k$ only intersect at their endpoints, i.e. at sites $k$ and $i$. By inspection, this requires $n = k + \ell - 2$, therefore the intersection is irregular for $n < k + \ell - 1$.
\end{proof}
\begin{corollary} \label{cor:regular commutation}
 The commutation relations between $B_i$ and $C_j$ change if and only if $n$ transitions from the irregular to the regular regime, or vice versa. 
\end{corollary}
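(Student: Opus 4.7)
The plan is to derive the corollary directly from Lemma \ref{lem:regularity} together with the site-wise structure of the Pauli symplectic form.

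First, I would observe that for two Pauli operators $B_i$ and $C_j$ on a ring of $n$ sites, whether $[B_i,C_j]=0$ is controlled by the symplectic product $\omega(\xi(B_i),\xi(C_j))\in\mathbb{F}_2$ introduced in Section \ref{subsec:background}. By its definition, $\omega$ decomposes as a sum over lattice sites of local contributions $\xi^X_s\eta^Z_s+\xi^Z_s\eta^X_s$, each of which can be nonzero only at sites $s$ lying in the intersection of the supports of $B_i$ and $C_j$. Hence, once the local Pauli content of $B$ and $C$ is fixed by the operators themselves, the commutator is determined purely by the intersection pattern of their supports on the ring.

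Second, I would invoke Lemma \ref{lem:regularity}. Within the regular regime $n\geq k+\ell-1$, the intersection pattern is by definition invariant as $n$ varies: no overlap sites are created or destroyed, because the short gap on the ring never wraps around. Consequently, the site-wise sum defining $\omega$ is unchanged, and the commutator is unchanged. Crossing the threshold into the irregular regime, by contrast, wrap-around introduces (or, going the other way, removes) at least one overlap site, so the intersection geometry genuinely changes. Combining these two observations gives the claimed equivalence: the commutator can shift only when the intersection set shifts, and by Lemma \ref{lem:regularity} the intersection set shifts precisely at the transition between regimes.

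The main subtlety—and the only place where any care is needed—is that an added overlap site need not flip the parity of $\omega$ on its own, since the newly aligned single-qubit Paulis could happen to commute and contribute $0\in\mathbb{F}_2$. However, the direction of the corollary that is actually used downstream is the forward implication (commutation can change only if $n$ crosses the regularity threshold), and this follows unconditionally from the intersection set being $n$-independent within each regime. I therefore expect no real technical obstacle; the work is purely bookkeeping, matching each site in the intersection to its symplectic contribution and noting that the set of intersection sites is fixed throughout the regular regime.
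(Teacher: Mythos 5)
Your argument is essentially the paper's own proof, which likewise reduces the commutator to the support overlap of $B_i$ and $C_j$ and observes that this overlap is fixed within each regularity regime; you merely make the site-wise symplectic bookkeeping explicit. The caveat you raise about the converse direction (a newly created overlap site need not flip the parity of $\omega$) is a fair one, but it applies equally to the paper's one-line proof, and as you note only the forward implication is used downstream.
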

\begin{proof}
 The commutation relations are determined by the overlap between $B_i$ and $C_j$, and this overlap only changes if there is a transition from irregular to regular intersection, or vice versa. 
\end{proof}
\begin{corollary} \label{cor:geq 5}
 The commutation relations between the local generators of $S_L(A,n,d)$ and $S_R(A,n,d)$ are regular for $n \geq 5$, and irregular for $n<5$. 
\end{corollary}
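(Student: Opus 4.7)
The plan is to reduce Corollary \ref{cor:geq 5} to a bookkeeping exercise on top of Lemma \ref{lem:regularity}, using the explicit classification of local generators provided by Lemma \ref{lem:HV summary}. First I would inspect the four forms in Lemma \ref{lem:HV summary} (equivalently the nine entries of Table \ref{tab:islands}) and observe that, with the single exception of the global string $\prod_i \bar{g}^H_i$ appearing in form (ii), every local generator has support size at most $3$. Since the local generators of $S_R$ are obtained from those of $S_L$ by the update rule $T$, which preserves the geometric support of a symmetry, the same bound holds on the right.

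Next I would apply Lemma \ref{lem:regularity} pairwise: for a $k$-local generator of $S_L$ and an $\ell$-local generator of $S_R$ with $k,\ell\le 3$, the regularity threshold is $n\ge k+\ell-1\le 5$. Hence for $n\ge 5$ every such pair has regular intersection, so by Corollary \ref{cor:regular commutation} its commutation relation agrees with its $n\to\infty$ value. For the tightness claim, I would exhibit a concrete 3-local/3-local pair (for example $B_i=X_{i-1}Z_iX_{i+1}$ against $C_j=X_{j-1}Z_jX_{j+1}$) and show that its commutation pattern at $n=4$ differs from the $n\to\infty$ pattern because the two ends wrap around and overlap simultaneously, so the intersection is irregular at $n<5$.

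The only non-routine piece is handling the nonlocal generator $\prod_i \bar{g}^H_i$, which does not fit the hypothesis of Lemma \ref{lem:regularity}. I would argue separately that its commutation with any fixed $k$-local operator $C_j$ is determined by the parity of the number of sites in the support of $C_j$ on which $\bar{g}^H$ anticommutes with the local Pauli content of $C_j$; this parity depends only on $C_j$ (and on $\bar{g}^H$), not on $n$, provided $n$ is large enough to embed $C_j$ without self-overlap, which is implied by $n\ge 5$. Consequently the global generator contributes only regular commutation relations in the $n\ge 5$ regime and can be absorbed into the main argument.

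I expect the main obstacle to be precisely this last point: making the notion of ``regularity'' meaningful for the non-local generator, since Lemma \ref{lem:regularity} is stated for genuinely local operators. The cleanest way to handle it, which I would adopt, is to treat the global string as the $n\to\infty$ limit of a fixed-parity indicator on each candidate local operator and verify that this indicator stabilizes as soon as the local operator fits on the ring without wraparound, i.e., for $n\ge 5$. Once that is in place, the corollary follows by combining the support bound $\max(k,\ell)\le 3$ with Lemma \ref{lem:regularity} and Corollary \ref{cor:regular commutation}.
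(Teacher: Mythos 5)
Your proposal is correct and follows essentially the same route as the paper: bound the support of all local generators by $3$ (via the classification in Lemma \ref{lem:HV summary}/Table \ref{tab:islands}), apply Lemma \ref{lem:regularity} to get the threshold $n\geq 3+3-1=5$, and treat the global string $\prod_i\bar{g}^H_i$ as a special case. Your parity argument for the string is just a rephrasing of the paper's observation that the string has periodicity one and hence regular intersection already for $n\geq 3$, so the two proofs coincide in substance.
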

\begin{proof}
 Apart from the string-like operators $\prod_i \bar{g}_i$, every local generator has support on at most three adjacent sites. Applying the result of Lemma \ref{lem:regularity}, we have regular intersection for $n \geq 3 + 3 - 1 = 5$. The string-like operators have periodicity one, so their intersection with the rest of the island stabilizers is regular for $n \geq 3 + 1 - 1 = 3$. Therefore, the special case of string-like operators does not violate the general result.
\end{proof}

\subsubsection{Removing the dependence on $n$}

\begin{table}[ht]
  \centering
  \footnotesize
  \setlength\tabcolsep{1.5pt} 
  \begin{tabular}{|c|c|}
  \hline
  \thead{Standard \\ Local \\ Generators} & \thead{Generators of $\mathcal{Z}_L$} \\ \hline
  $\langle Z_i \rangle$ & $\langle Z_i \rangle$ \\ \hline
  $\langle X_{i-1}Z_iX_{i+1} \rangle$ & $\langle X_{i-1}Z_iX_{i+1} \rangle$ \\ \hline
  $\langle Z_i Z_{i+1}, \prod_i X_i \rangle$ & $\langle Z_i Z_{i+1}, \prod_i X_i \rangle$ \\ \hline
  $\langle Z_i, \prod_i X_i \rangle $ & $\langle Z_i Z_{i+1} \rangle $ \\ \hline
  $\langle Z_i, X_iX_{i+1}\rangle$ & $\langle \prod_i Z_i\rangle$ \\ \hline
  $\langle Z_i, X_{i-1}I_i X_{i+1} \rangle$ & $\begin{cases} \langle \prod_i Z_{2i}, \prod_i Z_{2i+1} \rangle &\text{$n$ even} \\ \langle \prod_i Z_i\rangle &\text{$n$ odd} \end{cases}$ \\ \hline
  $\langle Z_i, X_{i-1}X_iX_{i+1} \rangle$ & $\begin{cases} \langle \prod_i Z_{3i}Z_{3i+1}, \prod_i Z_{3i+1}Z_{3i+2} \rangle, \\ \qquad\qquad\qquad\qquad\text{$n$ div. by 3} \\ \langle I\rangle, \text{ otherwise} \end{cases}$ \\ \hline
  $\langle Z_i, X_iX_{i+1}, \prod_i X_i\rangle$ & $\begin{cases} \langle \prod_i Z_i \rangle &\text{$n$ even} \\ \langle I\rangle &\text{$n$ odd} \end{cases}$ \\ \hline
  $\langle Z_i, X_i\rangle$ & $\langle I\rangle$ \\ \hline
  \end{tabular}
  \caption{Generators of $\mathcal{Z}_L$ for each standard local generating set of $S_L$.}
  \label{tab:ZL}
\end{table}

\begin{definition}
  Let the vector $\va{\Omega}(A)$ be
  \begin{equation}
  \va{\Omega}(A) = \Omega_1(A,n) \quad n=1,\dots,\infty,
  \end{equation}
  and define $\va{\Omega}(A_1) \sim \va{\Omega}(A_2)$ if $\Omega_1(A_1,n) \sim \Omega_1(A_2,n)$ for all $n$.
\end{definition}
\begin{lemma} \label{lem:omega vector}
  $\va{\Omega}(A_1) \sim \va{\Omega}(A_2)$ if and only if $\Omega_1(A_1,n) \sim \Omega_1(A_2,n)$ for $n\leq 6$. 
\end{lemma}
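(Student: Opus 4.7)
The forward direction is immediate. For the converse, the plan is to exploit the local generating sets from Lemma~\ref{lem:HV summary} together with the regularity principle of Corollary~\ref{cor:geq 5} to argue that $\Omega_1(A,n)$ is, in the regular regime, an $n$-independent combinatorial datum up to at most one bit of branching, and that $n=5,6$ already realize both values of that bit.

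First I would observe that $\Omega_1(A,n)$ refines $\Phi_1(A,n)$: the biadjacency matrix $M$ is defined with respect to generating sets of $\mathcal{Z}_L(A,n,1)$ and $S_R(A,n,1)$, so an equivalence of $\Omega_1$ forces in particular agreement of the number $p(A,n)$ of anticommuting pairs. Since $\{1,\ldots,6\}$ contains values of $n$ that are even, odd, divisible by $3$, and not divisible by $3$, Lemma~\ref{lem:phi vector} then yields $\va{\Phi}(A_1)\sim\va{\Phi}(A_2)$. Thus $A_1$ and $A_2$ lie in a common row of Table~\ref{tab:islands}, and at every $n$ share the same local generating sets for $S_L$, $S_R$, and the same generators of $\mathcal{Z}_L$ prescribed by Table~\ref{tab:ZL}.

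With the generating sets pinned down, I would split into regimes. For $n\in\{1,2,3,4\}$ the hypothesis supplies $\Omega_1$ directly. For $n\geq 5$, Corollary~\ref{cor:geq 5} ensures that the intersection of any two local generators of support $\leq 3$ is regular, and the string-like generators have regular intersection with any such operator for $n\geq 3$. Hence the commutation pattern between $\mathcal{Z}_L$ and $S_R$ is an $n$-periodic combinatorial object whose shape is fixed by the $\Phi_1$ class. The only residual $n$-dependence arises from those rows of Table~\ref{tab:ZL} where the generators of $\mathcal{Z}_L$ themselves change with $n$: classes $\mathbf{d}$ and $\mathbf{f}$ branch on parity, while class $\mathbf{e}$ branches on whether $3\mid n$, and the branching is never simultaneous within a class. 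Since $n=5$ (odd, not divisible by $3$) and $n=6$ (even, divisible by $3$) exercise both branches of each bifurcation, agreement of $\Omega_1$ on this pair forces agreement throughout the regular regime.

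The main obstacle I expect is the bookkeeping in the second regime: one must verify, class by class in Table~\ref{tab:ZL}, that the regular commutation pattern combined with the prescribed generators of $\mathcal{Z}_L$ produces a standard-form $M$ that depends on $n$ only through the single branch dictated by the class, and that the row/column operations (i)--(v) used to reach the standard form for $M$ never mix the two branches. This is a finite check once the local structure is in hand; afterwards the three ingredients -- hypothesis for $n\leq 4$, anchors at $n=5,6$, and stability of the combinatorial pattern under growth of $n$ within a single branch -- combine to give $\Omega_1(A_1,n)\sim\Omega_1(A_2,n)$ for every $n$.
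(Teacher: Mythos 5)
Your proposal is correct and follows essentially the same route as the paper's proof: the hypothesis handles the irregular regime $n\leq 4$ directly, while for $n\geq 5$ the local generating sets, the generators of $\mathcal{Z}_L$ in Table~\ref{tab:ZL}, and the regularity of intersections (Corollary~\ref{cor:geq 5}) make $\Omega_1(A,n)$ depend on $n$ only through divisibility by $2$ or by $3$ (never both within a class), so the anchors $n=5,6$ suffice. One small caveat: $\va{\Phi}$-equivalence does not actually force a common row of Table~\ref{tab:islands}, since three distinct rows share $p\equiv 0$, but this does not harm your argument because the $n$-stability of $\Omega_1$ is established for each tensor separately and the comparison is then made only at the anchor values already covered by the hypothesis.
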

\begin{proof}
  By definition, $\va{\Omega}(A_1) \sim \va{\Omega}(A_2)$ implies $\Omega_1(A_1,n) \sim \Omega_1(A_2,n)$ for all $n$. To prove the converse, recall that $\Omega_1(A,n)$ is the set of commutators of the elements of $\mathcal{Z}_L(A,n,1)$ with the elements of $\mathcal{Z}_R(A,n,1)$ and $P_R(A,n,1)$. Furthermore, $\mathcal{Z}_L$ dictates the elements of $\mathcal{Z}_R$ and $P_R$. To see this, note that $P_L$ completes the centralizer of $\mathcal{Z}_L$, thus $\mathcal{Z}_L$ uniquely determines $P_L$. Then, $P_L$ uniquely determines $P_R$ by taking the right-hand part of each generator of $P_L$. Finally, $P_R$ uniquely determines $\mathcal{Z}_R$, because $\mathcal{Z}_R$ is the center of $P_R$. 
 
  In Table \ref{tab:ZL}, we have an expression for the generators of $\mathcal{Z}_L$ for every equivalence class of local tensors. These generating sets show that the elements of $\mathcal{Z}_L(A,n)$ and $\mathcal{Z}_L(A,m)$ are equal up to translation, as long as $n$ and $m$ share the same divisibility by 2 or 3. Therefore, the elements of every group relevant to $\Omega_1(A,n)$ remain constant up to translation, as long as $n$ retains the same divisibility by 2 or 3.
 
  Having characterized the elements of $\mathcal{Z}_L$, $\mathcal{Z}_R$ and $P_R$ up to translation, it remains to deal with their commutation relations. This is not trivial because periodic boundary conditions may play a role. However, every element of $\mathcal{Z}_L$, $\mathcal{Z}_R$, and $P_R$ may be written as a product of local generators. Thus, the commutator $\comm{x}{y}$ for $x \in \mathcal{Z}_L$ and $y\in \mathcal{Z}_R\cup P_R$ has a decomposition into pairwise commutators of local generators. By Corollaries \ref{cor:regular commutation} and \ref{cor:geq 5}, these pairwise commutators may vary in the irregular regime of intersection ($n \leq 4$), but they are invariant in the regular regime of intersection ($n\geq 5$). In other words, for any $n\geq 5$, every local generator anticommutes with the same finite subset of its neighbors.
 
  It follows that if $n_0 \geq 5$ and $\Omega_1(A_1,n_0) \sim \Omega_1(A_2,n_0)$, then $\Omega_1(A_1,n) \sim \Omega_1(A_2,n)$ for all $n$ which have the same divisibility by 2 and 3 as $n_0$. Therefore, to guarantee $\Omega_1(A_1,n) \sim \Omega_1(A_2,n)$ for all $n$, it suffices to check just a few values of $n_0$ and infer the equivalence of the $\Omega$s for all other $n$. Coverage of every $n$ in the regular regime can be obtained by checking at least one value of $n$ even, one value of $n$ odd, one divisible by 3, and one not divisible by 3. The values $n=5,6$ satisfy these requirements. 
  
  We have made no statement about the irregular regime ($n\leq 4$), thus we must also check every $n$ in the irregular regime. Putting everything together, we have that if $\Omega_1(A_1,n) \sim \Omega_1(A_2,n)$ for $n\leq 6$, then they are equivalent for all $n$. 
\end{proof}

The standard form of $\Omega_1(A,n)$ was computed for $n\leq 6$ and every stabilizer tensor, and it was found that there are 19 different cases. By Lemma \ref{lem:omega vector}, this implies that there are 19 equivalence classes of $\va{\Omega}$. With this result, we have a classification of both $\va{\Phi}$ and $\va{\Omega}$. The last piece of the puzzle is to show that if two stabilizer tensors share the same $\va{\Phi}$ and $\va{\Omega}$, then they are guaranteed to share the same channel capacity for any values of $n$ and $d$. 

Upon comparing the $\va{\Omega}$ class and the transmission class of each stabilizer tensor, it was found that there are 7 different $\va{\Omega}$'s that correspond to transmission Class 0 and one unique $\va{\Omega}$ for each of the other transmission classes. Thus, we can assign a unique tuple $(\va{\Phi},\va{\Omega})$ to the stabilizer tensors of each non-trivial transmission class. By Algorithm 1, $\va{\Phi}$ and $\va{\Omega}$ are sufficient to compute $C(A,n,d)$ for any $n$ and $d$, therefore Proposition \ref{prop:wire} must be true for every non-trivial transmission class.

Finally, we deal with the trivial transmission class, i.e. $C(A,n,d)=0$. By Lemma \ref{lem:step_down}, $C(A,n,1)$ cannot increase with depth, and Tables \ref{tab:islands} and \ref{tab:phi to tp} show that every $\va{\Phi}$ associated with transmission Class 0 gives $C(A,n,1)=0$ for all $n$. Therefore, if $A$ is in transmission Class 0, then $C(A,n,d)=0$ for all $n$ and $d$. This completes the proof of Proposition \ref{prop:wire}.

\section{Conclusion} \label{sec:concl}

We have reviewed three interesting computational tasks that may be accomplished via local measurements on stabilizer PEPS: quantum wire, quantum computation on a resource state, and quantum computation in an SPT phase. Classifying the capability of stabilizer PEPS with respect to these tasks would have two beneficial outcomes. First, it would help to place well-known states like the cluster state into perspective: are they exceptionally useful for quantum computation, or are they just one species in a zoo of useful states? Second, by revealing the differences between useful and not-so-useful classes of states, it might offer insights into the nature of quantum computation.

In this paper, we have taken the first step in a three-part classification program by presenting a complete classification of quantum wire in cylindrical stabilizer PEPS on a 2D square lattice. In particular, we show that the transmission capacity fits into 13 distinct classes, including the class of Clifford QCA where $C(n,d)=n$, the trivial class where $C(n,d)=0$, and 11 intermediate classes. 

It would also be interesting to explore the transmission capacity of stabilizer PEPS with different lattice geometries and boundary conditions, but we leave this direction for future work. With our original geometry, the next step is to classify the computational power of stabilizer PEPS as resource states for measurement based quantum computation. To be more precise: given a stabilizer PEPS, what is the set of unitary gates that may be effected by local measurement? Finally, one might ask if it is possible to construct an SPT phase around every stabilizer PEPS. If so, is the computational power uniform throughout those phases?

\section*{Acknowledgements}

This research was undertaken thanks, in part, to funding from the Canada First Research Excellence Fund, Quantum Materials and Future Technologies Program. PH is funded by the National Science and Engineering Research Council of Canada (NSERC). RR is supported through NSERC and US ARO (W911NF2010013).

\bibliographystyle{quantum}
\bibliography{2dWire}

\onecolumn\newpage
\appendix

\section{Proof of Lemma \ref{lem:HV p=1 q=1}} \label{apx:p1q1}

\mylemma*

\begin{proof}[Proof of (i)]
  Recall from \eqref{eq:RA} that when $q=1$, $$R_A = \expval{\gamma^u \otimes \gamma^d, \bar{\gamma}^u \otimes \bar{\gamma}^d}.$$ If $\gamma^u=\gamma^d=\gamma$, we have $v = \gamma\otimes\gamma$ and $g^V = \bar{\gamma}^u\otimes\bar{\gamma}^d$, where $\gamma$, $\bar{\gamma}^u$, and $\bar{\gamma}^d$ are different Pauli operators. Let $\gamma = X$, $\bar{\gamma}^u = Z$, and $\bar{\gamma}^d=Y$ without loss of generality. This gives $v = X\otimes X$ and $g^V = Z\otimes Y$. The only 2-qubit Pauli operators that commute with $X\otimes X$ but anticommute with $Z\otimes Y$ are $X\otimes I$ and $I\otimes X$, which are in the same coset with respect to $\{e,v\}$. Therefore, $\bar{g}^V = X\otimes I$ completes the group $S_A^{(V)}$. The cosets with respect to $\{e,v\}$ are:
  \begin{equation}
  \begin{aligned}
  &\qty[v]_v = \{X\otimes X, I\otimes I\} \\
  &\qty[g^V]_v = \{Z \otimes Y, Y \otimes Z\} \\
  &\qty[\bar{g}^V]_v = \{X\otimes I, I\otimes X\} \\
  &\qty[g^V \bar{g}^V]_v = \{Y\otimes Y, Z\otimes Z\}.
  \end{aligned}
  \end{equation}
 
  It will be convenient to re-label the cosets cyclicly, i.e. $g^V \to \bar{g}^V \to g^V\bar{g}^V \to g^V$, because this gives the vertical tilings $\prod_i \bar{g}^V_i$ and $g^V_i g^V_{i+1}$. Reading off the action of these tilings on the horizontal partition from \eqref{eq:HV canonical form}, we have $\prod_i \bar{g}^H_i$ and $g^H_i g^H_{i+1}$ as elements of $S(A,n,1)$. The former accounts for one generator, and the latter for $n-1$ independent generators. Finally, $h_i$ provide the remaining $n$ generators.
 \end{proof}
 \begin{proof}[Proof of (ii)]
  If $\gamma^u \neq \gamma^d$, then $v = \gamma^u \otimes \gamma^d$. Without loss of generality, let $\gamma^u = X$ and $\gamma^d = Z$ so that $v = X\otimes Z$. Once again we must find three independent cosets that commute with $v$ but anticommute pairwise:
  \begin{equation} \label{eq:HV p=1 q=1 neq}
  \begin{aligned}
  &\qty[v]_v = \{X\otimes Z, I \otimes I\} \\
  &\qty[g^V]_v = \{Y \otimes Y, Z \otimes X\} \\
  &\qty[\bar{g}^V]_v = \{X\otimes I, I\otimes Z\} \\
  &\qty[g^V \bar{g}^V]_v = \{Z\otimes Y, Y\otimes X\}.
  \end{aligned}
  \end{equation}
  Upon inspection of \eqref{eq:HV p=1 q=1 neq}, the vertical tiling $\bar{g}^V_{i-1}g^V_i\bar{g}^V_{i+1}$ is a local operator. The action of this operator on the horizontal partition gives $\bar{g}^H_{i-1}g^H_i\bar{g}^H_{i+1} \in S(A,n,1)$. This is a set of $n$ independent and commuting generators, and $h_i$ provide the remaining $n$ generators.
 \end{proof}

\end{document}